\newtheorem{theorem}{Theorem}
\newtheorem{definition}{Definition}
\newtheorem{lemma}[theorem]{Lemma}
\newtheorem{corollary}[theorem]{Corollary}
\newcommand{\ket}[1]{\left|#1\right>}
\newcommand{\bra}[1]{\left<#1\right|}
\newcommand{\Z}{{\mathbb{Z}}}
\newcommand{\F}{{\mathbb{F}}}
\newcommand{\E}{{\mathbb{E}}}
\newcommand{\R}{{\mathbb{R}}}
\newcommand{\GL}[0]{{\rm GL}}
\newcommand{\onemat}{{\mathbf{1}}}
\newcommand{\zeromat}{{\mathbf{0}}}
\newcommand{\nix}[1]{{}}
\begin{document}

\title{Quantum algorithms for highly non-linear Boolean functions}

\author{Martin R{\"o}tteler\\
NEC Laboratories America\\
4 Independence Way, Suite 200\\
Princeton, NJ 08540, U.S.A.\\
{\tt mroetteler@nec-labs.com}
}

\maketitle

\begin{abstract} 
 Attempts to separate the power of classical and quantum models of
  computation have a long history. The ultimate goal is to find
  exponential separations for computational problems. However, such
  separations do not come a dime a dozen: while there were some early
  successes in the form of hidden subgroup problems for abelian
  groups--which generalize Shor's factoring algorithm perhaps most
  faithfully--only for a handful of non-abelian groups efficient
  quantum algorithms were found.  Recently, problems have gotten
  increased attention that seek to identify hidden sub-structures of
  other combinatorial and algebraic objects besides groups. In this
  paper we provide new examples for exponential separations by
  considering hidden shift problems that are defined for several
  classes of highly non-linear Boolean functions. These so-called bent
  functions arise in cryptography, where their property of having
  perfectly flat Fourier spectra on the Boolean hypercube gives them
  resilience against certain types of attack. We present new quantum
  algorithms that solve the hidden shift problems for several
  well-known classes of bent functions in polynomial time and with a
  constant number of queries, while the classical query complexity is
  shown to be exponential. Our approach uses a technique that exploits
  the duality between bent functions and their Fourier transforms.
\end{abstract}

\section{Introduction}
A salient feature of quantum computers is that they allow to solve
certain problems much more efficiently than any classical machine.
The ultimate goal of quantum computing is to find problems for which
an exponential separations between quantum and classical models of
computation can be shown in terms of the required resources such as
time, space, communication, or queries. It turns out that the question
about a provably exponential advantage of a quantum computer over
classical computers is a challenging one and examples showing a
separation are not easy to come by. Currently, only few (promise)
problems giving an exponential separation between quantum and
classical computing are known. A common feature they share is that,
simply put, they all ask to extract hidden features of certain
algebraic structures.  Examples for this are hidden shift problems
\cite{vDHI:2003}, hidden non-linear structures \cite{CSV:2007}, and
hidden subgroup problems (HSPs). The latter class of hidden subgroup
problems was studied quite extensively over the past decade.  There
are some successes such as the efficient solution of the HSP for any
abelian group \cite{Shor:97,Kitaev:97}, including factoring and
discrete log as well as Pell's equation \cite{Hallgren:2002}, and
efficient solutions for some non-abelian groups
\cite{FIMSS:2003,BCvD:2005}. However, meanwhile some limitations of
the known approaches to this problem are known \cite{HMRRS:2006} and
presently it is unclear whether the HSP can lend itself to a solution
to other interesting problems such as the graph isomorphism problem.

Most of these methods invoke Fourier analysis over a finite group $G$.
In some sense the Fourier transform is good at capturing some
non-trivial global properties of a function $f$ which at the same time
are hard to figure out for the classical computer which can probe the
function only locally at polynomially many places. For many groups $G$
the quantum computer has the unique ability to compute a Fourier
transform for $G$ very efficiently, i.\,e., in time ${\log^{O(1)} n}$,
where $n$ is the input size. Even though the access to the Fourier
spectrum is somewhat limited, namely via sampling, it nevertheless has
been shown that this limited access can be quite powerful.
Historically, the first promise problems which tried to leverage this
power were defined for certain classes of Boolean functions: the
Deutsch-Jozsa problem \cite{DJ:92} is to decide whether a Boolean
function $f: \Z_2^n \rightarrow \Z_2$ that is promised to be either
constant or a balanced function is actually constant or balanced. In
the Fourier picture this asks to distinguish between functions that
have all their spectrum supported on the $0$ frequency and functions
which have no $0$ frequency component at all. It therefore comes as no
surprise that by sampling from the Fourier spectrum the problem can be
solved. Furthermore, it can be shown that any deterministic classical
algorithm must make an exponential number of queries.  However, this
problem can be solved on a bounded error polynomial time classical
machine. Hence other, more challenging, problems were sought which
asked for more sophisticated features of the function $f$ and were
still amenable to Fourier sampling. One such problem is to identify
$r\in \Z_2^n$ from black box access to a linear Boolean function $f(x)
= r x$, where $x\in \Z_2^n$. Again, in the Fourier domain the picture
looks very simple as each $f$ corresponds to a perfect delta peak
localized at frequency $r$, leading to an exact quantum algorithm
which identifies $r$ using a single query.  Classically, it can be
shown that $\Theta(n)$ queries are necessary and sufficient to
identify $r$ with bounded error.  Based on the observation that a
quantum computer can even handle the case well in which access to $x$
is not immediate but rather through solving another problem of a
smaller size, Bernstein and Vazirani \cite{BV:97} defined the
recursive Fourier sampling (RFS) problem by organizing many instances
of learning a hidden linear function in a tree-like fashion. By
choosing the height of this tree to be $\log n$ they showed a
separation between quantum computers, which can solve the problem in
$n$ queries, and classical computers which require $n^{\log n}$
queries. Soon after this, more algorithms were found that used the
power of Fourier sampling over an abelian group, namely Simon's
algorithm \cite{Simon:94} for certain functions $f:\Z_2^n \rightarrow
\Z_2^{n-1}$, and Shor's algorithms \cite{Shor:97}, where $f$ was
defined on cyclic groups and products thereof, eventually leading to
the HSP.

The idea to achieve speedups from Boolean functions themselves has
obtained significantly less attention. Recently, Hallgren and Harrow
\cite{HH:2008} revisited the RFS problem and showed that other unitary
matrices can serve the role of the Fourier transform in the definition
of RFS problems. They have obtained superpolynomial speedups over
classical computing for a wide class of Boolean functions and unitary
matrices, including random unitary matrices. Together with lower bound
results \cite{Aaronson:2003} this gives a reasonably good
understanding of the power and limitations of the RFS problem. In
another important development, it was shown that the ability to
efficiently perform Fourier transforms on a quantum computer can also
be used to efficiently perform correlations between certain functions.
In the so-called hidden shift problem defined by van Dam, Hallgren,
and Ip \cite{vDHI:2003} this was used in the context of computing a
correlation between a black box implementation of
$f(x)=\left(\frac{x+s}{p}\right)$, where $\left(\frac{x}{p}\right)$
denotes the Legendre symbol and $s\in \Z_p$ is a fixed element, and
the Legendre symbol itself. The main idea behind this is that the
Fourier transform of a shifted function picks up a linear phase which
depends on the shift. Since a correlation corresponds to point-wise
multiplication of the Fourier transforms and since the Legendre symbol
is its own Fourier transform, the correlation can be performed by
computing the Legendre symbol into the phase, leading to an efficient
algorithm that needs only a constant number of queries. The classical
query complexity of this problem is polynomial in $\log p$.

{\bf Our results.} Our main contribution is a generalization of the
hidden shift problem for a class of Boolean functions known as bent
functions \cite{Rothaus:76}. Bent functions are those Boolean
functions for which the Hamming distance to the set of all linear
Boolean functions is maximum (based on comparing their truth tables).
For this reason bent functions are also called maximum non-linear
functions.\footnote{Note that high nonlinearity of a function refers
  to the spectral characterization, i.\,e., the Hamming weight of the
  highest non-zero frequency component is high. It does not imply that
  $f(x) = \sum_{\nu \in \Z_2^n} \alpha_\nu x^\nu$, when written as a
  multivariate polynomial over $\F_2$, has a high (algebraic) degree,
  defined as the maximum degree of any monomial $x^\nu$.  Indeed,
  there are many examples of highly nonlinear functions whose
  algebraic degree is $2$.}  A direct consequence of this is that the
Fourier transform of a bent function $f$ is perfectly flat, i.\,e., in
absolute value all Fourier coefficients, which are defined with
respect to the real valued function $x \mapsto (-1)^{f(x)}$, are equal
and as small as possible. This feature of having a flat Fourier
spectrum is desirable for cryptographic purposes because, roughly
speaking, such a function is maximally resistant against attacks that
seek to exploit a dependence of the outputs on some linear subspace of
the inputs. It turns out that bent functions exist if and only if the
number of variables is even and that there are many of them:
asymptotically, the number of bent functions in $n$ variables is at
least $\Omega\left( \left(\frac{2^{n/2+1}}{e}\right)^{2^{n/2}}
  \sqrt{2\pi 2^{n/2}}\right)$, see for instance \cite{CG:2006}. What
is more, several explicit constructions of infinite families of bent
functions are known and they are related to so-called difference sets
which are objects studied in combinatorics.  Since the Fourier
transform of $f$ is flat and the Boolean Fourier transform is real, it
follows that (up to normalization) the Fourier spectrum takes only
values $\pm 1$, i.\,e., it again is described by a Boolean function,
called the {\em dual} bent function and denoted by $\widetilde{f}$.
Arguably, the most prominent example for a bent function is the inner
product function $ ip_n(x_1, \ldots, x_{n}) = \sum_{i=1}^{n/2}
x_{2i-1} x_{2i} $ written in short as $ip_n(x,y)=xy^t$.  This function
can be generalized to $f(x,y) = x\pi(y)^t+g(y)$, where $\pi$ is an
arbitrary permutation of strings of length $n/2$ and $g:\Z_2^{n/2}
\rightarrow \Z_2$ is an arbitrary function. This leads to the class of
so-called Maiorana-McFarland bent functions. The dual bent function is
then given by the Boolean function $\widetilde{f}(x,y) = \pi^{-1}(x) y^t
+ g(\pi^{-1}(x))$.

We define the hidden shift problem for a fixed bent function $f$ as
follows: an oracle ${\cal O}$ provides us with access to $f$ and $g$,
where $g$ is promised to be a shifted version of $f$ with respect to
some unknown shift $s$. Using oracles of this kind, we show an
exponential separation of the quantum and classical query complexity
of the hidden shift problem, the former being at most linear, the
latter being exponential. Furthermore, we also consider a variation
of the problem where an oracle $\widetilde{{\cal O}}$ in addition
provides oracle access to the dual bent function $\widetilde{f}$. We
show that $s$ can be extracted from $\widetilde{{\cal O}}$ by a
quantum algorithm using one query to $f$ and one query to
$\widetilde{f}$. We present two other classes of bent functions,
namely the partial spread class defined by Dillon \cite{Dillon:75} and
a class defined by Dobbertin \cite{Dobbertin:95}, which uses
properties of certain Kloosterman sums over finite fields to show the
bentness of the functions.

What is the significance of our result? In short, we provide new
examples for exponential separations between quantum and classical
computing. The class of problems studied in this paper yields a large
new set of problems for exponential separations in query complexity
with respect to oracles. A feature of the quantum algorithms presented
here are their simplicity in that besides classical computation of
function values the only quantum operation required are the Fourier
transform over the groups $\Z_2^n$. 

How does this relate to other separations? While exponential
separations in query complexity were known before, for instance for
abelian hidden subgroup problems, the hidden shift problems for bent
functions are the first problems for which such a separation can be
shown from {\em Boolean} functions. In the case of abelian HSP for
order $2$ subgroups of $\Z_2^n$, it is possible to assume that the
functions hiding the hidden subgroup take the form $f(x) = \pi(Ax)$,
where $A\in \F_2^{(n-1)\times n}$ is a matrix of rank $n-1$, and $\pi$
is a permutation of strings of length $n-1$. The goal is to find a
vector $s \in \F_2^n$ in the kernel of $A$. Note that these functions
are not Boolean functions but rather functions from $\Z_2^n
\rightarrow \Z_2^{n-1}$. To the best of our knowledge the best
separations that were obtainable so far from Boolean functions were
the superpolynomial separations shown in \cite{HH:2008}. Those were
obtained by generalizing the ideas of recursive Fourier sampling from
parity functions to more general classes of Boolean functions.

{\bf Related work.} The techniques used in this paper are related to
the techniques used in \cite{vDHI:2003}, in particular the method of
using the Fourier transform thrice in order to correlate a shifted
function with a given reference function, thereby solving a
deconvolution problem. We see the main difference in the richness of
the class of Boolean functions for which the method can be applied and
the query lower bound.

It was observed in \cite{FIMSS:2003,Kuperberg:2005} that the hidden
shift problem for {\em injective} functions $f,g: G \rightarrow S$
from an abelian $G$ to a set $S$ is equivalent to hidden subgroup
problem over $G \rtimes \Z_2$, where the action of $\Z_2$ on $G$ is
given by the inverse. There are several other papers that deal with
the injective hidden shift problem over abelian and non-abelian groups
\cite{CvD:2007,CW:2007,MRR+:2007}. In contrast, the functions studied
here are defined on the abelian group $\Z_2^n$ and very far from being
injective. As we show it will be nevertheless possible to define a
related hidden subgroup problem over an elementary abelian group,
however, for this we have to consider ``quantum functions'' to encode
the period.
 
Perhaps most closely related to our scenario is the work by Russell
and Shparlinski \cite{RS:2004} who considered shift problems for the
case of $\chi(f(x))$, where $f$ is a polynomial on a finite group $G$
and $\chi$ a character of $G$, a general setup that includes our
scenario.  The two cases for which algorithms were given in
\cite{RS:2004} are the reconstruction of a monic, square-free
polynomial $f \in \F_p[X]$, where $\chi$ is the quadratic character
(Legendre symbol) over $\F_p$ and the reconstruction of a hidden shift
over a finite group $\chi(sx)$, where $\chi$ is the character of a
known irreducible representation of $G$.  The technique used in
\cite{RS:2004} is a generalization of the technique of
\cite{vDHI:2003}. In the present paper we extend the class of
functions for which the hidden shift problem can be solved to the case
where $f$ is a multivariate polynomial and $G$ is the group $\Z_2^n$.

Related to the hidden shift problem is the problem of unknown shifts,
i.\,e., problems in which we are given a supply of quantum states of
the form $\ket{D+s}$, where $s$ is random, and $D$ has to be
identified. Problems of this kind have been studied by Childs,
Vazirani, and Schulman \cite{CSV:2007}, where $D$ is a sphere of
unknown radius, Decker, Draisma, and Wocjan \cite{DDW:2008}, where $D$
is a graph of a function, and Montanaro \cite{Montanaro:2009}, where
$D$ is the set of points of a fixed Hamming-weight. The latter paper
also considers the cases where $D$ hides other Boolean functions such
as juntas, a problem that was also studied in \cite{AS:2007}.  In
contrast to all these problems in our case the set $D$ is already
known, but the shift $s$ has to be identified.

We are only aware of relatively few occasions where bent functions
have been used in theoretical computer science: they were used in the
context of learning of intersections of halfspaces \cite{KS:2007},
where they gave rise to maximum possible number of slicings of edges
of the hypercube.  Also the recent counterexample for failure of the
inverse Gowers conjecture in small characteristic \cite{LMS:2008} uses
a special bent function.

\section{Fourier analysis of Boolean functions}

We recall some basic facts about Fourier analysis of Boolean
functions, see also the recent review article \cite{DeWolf:2008} for
an introduction. Let $f : \Z_2^n \rightarrow \R$ be a real valued
function on the $n$-dimensional Boolean hypercube. The Fourier
representation of $f$ is defined as follows. First note that for any
subset $S \subseteq [n]=\{1, \ldots, n\}$ we can define a character of
$\Z_2^n$ via $\chi_S: x \mapsto (-1)^{S x^t}$, where $x\in \Z_2^n$
(the transpose is necessary as we assume that all vectors are row
vectors). The inner product of two functions on the hypercube is
defined as $\langle f,g\rangle = \frac{1}{2^n}\sum_x f(x) g(x) =
\E_x(fg)$. The $\chi_S$ are inequivalent characters of $\Z_2^n$, hence
they obey the orthogonality relation $\E_x(\chi_S \chi_T) =
\delta_{S,T}$. The Fourier transform of $f$ is a function
$\widehat{f}: \Z_2^n \rightarrow \R$ defined by
\begin{equation}\label{eq:usualDFT}
\widehat{f}(S) = \E_x(f \chi_S) = \frac{1}{2^n} \sum_{x \in \Z_2^n}
\chi_S(x) f(x),
\end{equation}
$\widehat{f}(S)$ is the Fourier coefficient of $f$ at frequency $S$,
the set of all Fourier coefficients is called the Fourier spectrum of
$f$ and we have the representation $f = \sum_S \widehat{f}(S) \chi_S$.
Two useful facts about the Fourier transform of Boolean functions are
Parseval's identity and the convolution property.  Parseval's identity
says that $\|f\|^2_2 = \sum_S \widehat{f}(S)^2$ which is a special
case of $\langle f,g \rangle = \sum_S \widehat{f}(S) \widehat{g}(S)$.
For two Boolean functions $f,g : \Z_2^n \rightarrow \R$ their
convolution $(f * g)$ is the function defined as $(f*g)(x) =
\frac{1}{2^n} \sum_{y \in \Z_2^n} f(x+y) g(y)$. A standard feature of
the Fourier transform is that it maps the group operation to a point
wise operation in the Fourier domain. Concretely, this means that 
$\widehat{f*g}(S) = \widehat{f}(S) \widehat{g}(S)$, i.\,e.,
convolution becomes point-wise multiplication and vice-versa.

In quantum notation the Fourier transform on the Boolean hypercube
differs slightly in terms of the normalization and is given by the
unitary matrix 
\[
H_{2^n} = \frac{1}{\sqrt{2^n}}\sum_{x,y\in \Z_2^n}
(-1)^{xy^t} \ket{x}\bra{y}.
\]
This is sometimes called the Hadamard transform \cite{NC:2000}. In
this paper we will also use the Fourier spectrum defined with respect
to the Hadamard transform which differs from (\ref{eq:usualDFT}) by a
factor of $2^{-n/2}$. It is immediate from the definition of $H_{2^n}$
that it can be written in terms of a tensor (Kronecker) product of the
Hadamard matrix of size $2\times 2$, namely $H_{2^n} = (H_2)^{\otimes
  n}$, a fact which makes this transform appealing to use on a quantum
computer since it can be computed using $O(n)$ elementary operations.
Also note that in the context of cryptography also the name
Walsh-Hadamard transform for $H_{2^n}$ is common.

Another note on a convention which applies when we consider $\Z_2$
valued functions $f:\Z_2^n \rightarrow \Z_2$. Then we tacitly assume
that the real valued function corresponding to $f$ is actually $F : x
\mapsto (-1)^{f(x)}$.  The Fourier transform is then defined with
respect to $F$, i.\,e., we obtain that
\begin{equation}\label{eq:unitaryDFT}
  \widehat{F}(w) =
  \frac{1}{2^n} \sum_{x \in \Z_2^n} (-1)^{wx^t+f(x)},
\end{equation}
where we use $w\in \Z_2^n$ instead of $S\subseteq [n]$ to denote the
frequencies. Other than this notational convention, the Fourier
transform used in (\ref{eq:unitaryDFT}) for Boolean valued functions
and the Fourier transform used in (\ref{eq:usualDFT}) for real valued
functions are the same.  In the paper we will sloppily identify
$\widehat{f} = \widehat{F}$ and it will be clear from the context
which definition has to be used.

\section{Bent functions}

\begin{definition}
  Let $f : \Z_2^n \rightarrow \Z_2$ be a Boolean function. We say that
  $f$ is {\em bent} if the Fourier coefficients $\widehat{f}(w) =
  \frac{1}{2^n} \sum_{x \in \Z_2^n} (-1)^{wx^t+f(x)}$ satisfy
  $|\widehat{f}(w)| = {2^{-n/2}}$ for all $w \in \Z_2^n$, i.\,e., if
  the spectrum of $f$ is flat.
\end{definition} 
Necessary for bent functions in $n$ variables to exist is that $n$ is
even \cite{Dillon:75,MS:77}. If $f$ is bent, then this implicitly
defines another Boolean function via $2^{n/2} \widehat{f}(w) =:
(-1)^{\widetilde{f}(w)}$. Then this function $\widetilde{f}$ is again
a bent function and called the dual bent function of $f$. By taking
the dual twice we obtain $f$ back: $\widetilde{\widetilde{f}} = f$.

\subsection{A first example: the inner product function}

The most simple bent function is $f(x, y):= x y$ where $x,y \in \Z_2$.
It is easy to verify that $f$ defines a bent function. This can be
generalized to $2n$ variables \cite{MS:77} and we obtain the inner
product
\[
ip_n(x_1, \ldots, x_n, y_1, \ldots, y_n) :=\sum_{i=1}^n x_i y_i.
\] 
Again, it is easy to see that $ip_n$ is bent.
In Section \ref{sec:MMclass} we will see that $ip_n$ belongs to a much
larger class of bent functions. There (in Lemma \ref{lem:MM}) we also
establish that that $ip_n=\widetilde{ip}_n$ is its own dual bent
function which also implies that the vector
$[(-1)^{ip_n(x,y)}]_{x,y\in \Z_2^n}$ is an eigenvector of $H_{2^n}$.
This should be compared to \cite{vDHI:2003} where it was used that the
Legendre symbol $\left(\frac{\cdot}{p}\right)$ gives rise to an
eigenvector of the Fourier transform ${\rm DFT}_p$ over the cyclic
group $\Z_p$. The shift problem for the inner product function is
closely related to the Fourier sampling problem of finding a string
$a$ that is hidden by the function $f(a,x) = ax^t$ \cite{BV:97}, and
indeed the string $a$ can be readily identified from the state
$\frac{1}{\sqrt{2^n}} \sum_{x\in \Z_2^n} (-1)^{ax^t} \ket{x}$.  In the
hidden shift problem the problem is to identify $(a,b)$ from
$\frac{1}{2^n} \sum_{x,y \in \Z_2^n} (-1)^{ip_n(x+a,y+b)}\ket{x,y}$.
This state is up to a global phase given by $\frac{1}{2^n}
\sum_{x,y\in \Z_2^n} (-1)^{xy^t + xb^t + ya^t} \ket{x,y}$.  By
computing $ip_n$ into the phase the latter can be mapped to
$\frac{1}{2^n} \sum_{x,y\in \Z_2^n} (-1)^{xb^t+ya^t} \ket{x,y}$. From
this state the string $(a,b)$ can be extracted by applying to it a
Boolean Fourier transform followed by measurement in the computational
basis.

\subsection{Bent function families}\label{sec:MMclass}

Many examples of bent functions are known and we briefly review some
of these classes. Recall that any quadratic Boolean function $f$ has
the form $f(x_1,\ldots,x_n) = \sum_{i<j} q_{i,j} x_i x_j + \sum_i
\ell_i x_i$ which can be written as $f(x) = x Q x^t + Lx^t$, where
$x=(x_1, \ldots, x_n)\in \Z_2^n$.  Here, $Q\in \F_2^{n\times n}$ is an
upper triangular matrix and $L \in \F_2^{n}$.  Note that since we are
working over the Boolean numbers, we can without loss of generality
assume that the diagonal of $Q$ is zero (otherwise, we can absorb the
terms into $L$). It is useful to consider the associated symplectic
matrix $B = (Q + Q^t)$ with zero diagonal which defines a symplectic
form ${\cal B}(u,v) = u B v^t$.  This form is non-degenerate if and
only if ${\rm rank}(B)=n$.  The coset of $f+R(n,1)$ of the first order
Reed-Muller code is described by the rank of $B$. This follows from
Dickson's theorem \cite{MS:77} which gives a complete classification
of symplectic forms over $\Z_2$:

\begin{theorem}[Dickson]\label{th:Dickson}
  Let $B\in \Z_2^{n \times n}$ be a symmetric matrix with zero
  diagonal (such matrices are also called symplectic matrices). Then
  there exists $R \in {\rm GL}(n, \Z_2)$ and $h\in [n/2]$ such that
  $RBR^t = D$, where $D$ is the matrix $(\onemat_{h} \otimes \sigma_x)
  \oplus \zeromat_{n-2h}$ considered as a matrix over $\Z_2$ (where
  $\sigma_x$ is the permutation matrix corresponding to $(1,2)$). In
  particular, the rank of $B$ is always even. Furthermore, under the
  base change given by $R$ the function $f$ becomes the quadratic form
  $ip_h(x_1, \ldots, x_{2h}) + L^\prime(x_1, \ldots, x_{n})$ where we
  used the inner product function $ip_h$ and a linear function
  $L^\prime$.
\end{theorem}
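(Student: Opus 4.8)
The plan is to prove the statement in two stages: first the normal form for the symplectic matrix $B$ itself, then deduce the consequence for the quadratic form $f$.

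\textbf{Normal form for $B$.} I would regard $B$ as the Gram matrix of the bilinear form $\mathcal{B}(u,v) = uBv^t$ on $V = \Z_2^n$, and a base change $R \in \GL(n,\Z_2)$ as replacing $B$ by $RBR^t$ (the Gram matrix in the new basis whose rows are the rows of $R$, since $\mathcal{B}(r_i,r_j) = (RBR^t)_{ij}$). The hypothesis that $B$ is symmetric with zero diagonal says precisely that $\mathcal{B}$ is \emph{alternating}: $\mathcal{B}(u,u)=0$ for all $u$. Over $\F_2$ this is strictly stronger than symmetry, and it is exactly this extra input that makes the argument work without case analysis. The proof is by induction on $n$. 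If $B=\zeromat$ the claim is trivial with $h=0$. Otherwise pick a nonzero entry $B_{ij}=1$; then $u=e_i$, $v=e_j$ form a \emph{hyperbolic pair}, $\mathcal{B}(u,v)=1$, and $u\neq v$ because $\mathcal{B}(u,u)=0$. On $W=\mathrm{span}(u,v)$ the form has Gram matrix $\sigma_x$ in the basis $(u,v)$, hence is non-degenerate. I would then show $V = W \oplus W^\perp$, where $W^\perp=\{x:\mathcal{B}(x,w)=0 \text{ for all }w\in W\}$: for any $x$, the two equations $\mathcal{B}(x-w,u)=0$, $\mathcal{B}(x-w,v)=0$ have a solution $w\in W$ by non-degeneracy of $\mathcal{B}|_W$, and $W\cap W^\perp=0$ for the same reason. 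The restriction $\mathcal{B}|_{W^\perp}$ is again alternating on an $(n-2)$-dimensional space, so by the inductive hypothesis there is a basis of $W^\perp$ in which its Gram matrix is $(\onemat_{h-1}\otimes\sigma_x)\oplus\zeromat_{n-2h}$. Collecting $(u,v)$ together with this basis as the rows of $R$ gives $RBR^t = (\onemat_h\otimes\sigma_x)\oplus\zeromat_{n-2h}=D$. In particular $\mathrm{rank}(B)=\mathrm{rank}(D)=2h$ is even.

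\textbf{From $B$ to $f$.} Write $f(x)=xQx^t+Lx^t$ with $Q$ upper triangular, zero diagonal, and $B=Q+Q^t$. A point $x$ with new coordinates $y$ satisfies $x=yR$, so $f(yR)=y(RQR^t)y^t+(LR^t)y^t$. Setting $Q'=RQR^t$ and using $y_i^2=y_i$ over $\F_2$, we split $yQ'y^t=\sum_i Q'_{ii}y_i+\sum_{i<j}(Q'_{ij}+Q'_{ji})y_iy_j$, where the diagonal contributes a linear term and $Q'_{ij}+Q'_{ji}=(R(Q+Q^t)R^t)_{ij}=(RBR^t)_{ij}=D_{ij}$. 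The nonzero entries of $D$ above the diagonal are exactly the pairs $(2k-1,2k)$ for $k=1,\dots,h$, so $\sum_{i<j}D_{ij}y_iy_j=\sum_{k=1}^h y_{2k-1}y_{2k}=ip_h(y_1,\dots,y_{2h})$. Folding the transformed diagonal and linear parts into a single linear form $L'$ yields $f(yR)=ip_h(y_1,\dots,y_{2h})+L'(y_1,\dots,y_n)$, as claimed.

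\textbf{Main obstacle.} The only delicate point is characteristic $2$: one must consistently keep the distinction between a symmetric form and an alternating one, use the zero-diagonal hypothesis to guarantee genuine hyperbolic pairs (in particular $u\neq v$ and that the induced $2\times 2$ block is $\sigma_x$ rather than something with a nonzero diagonal entry), and carefully track the diagonal-to-linear absorption when passing between $Q$ and $B$. Everything else — the orthogonal direct-sum decomposition $V=W\oplus W^\perp$ and the induction — is the standard Witt-style argument, which over $\F_2$ requires no subcases precisely because alternating forms admit no anisotropic planes.
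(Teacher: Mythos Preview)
Your argument is correct. Note, however, that the paper does not supply its own proof of this statement: Dickson's theorem is quoted as a classical result with a reference to MacWilliams--Sloane \cite{MS:77}, so there is no in-paper proof to compare against. What you have written is precisely the standard proof one finds in that reference (and in most textbooks on quadratic forms over $\F_2$): classify the alternating bilinear form $\mathcal{B}(u,v)=uBv^t$ by peeling off hyperbolic planes $W=\mathrm{span}(u,v)$ with $\mathcal{B}(u,v)=1$, use non-degeneracy on $W$ to split $V=W\oplus W^\perp$, and induct. Your second stage, passing from the normal form of $B$ back to $f$, is also handled correctly; the key identity $Q'+Q'^t=R(Q+Q^t)R^t=RBR^t=D$ together with the $\F_2$ relation $y_i^2=y_i$ is exactly what is needed to absorb the diagonal of $Q'$ into the linear part $L'$ and leave $ip_h$ as the quadratic part.
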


Next, we give a characterization of the Fourier transform of an affine
transform of a bent function.

\begin{lemma}[Affine transforms]\label{lem:aff}
  Let $f$ be a bent function, let $A\in \GL(n, \Z_2)$ and $b \in
  \Z_2^n$, and define $g(x) := f(xA+b)$. Then also $g(x)$ is a bent
  function and $\widehat{g}(w) = (-1)^{-w(A^{-1})^t b}
  \widehat{f}(w(A^{-1})^t)$ for all $w\in \Z_2^n$.
\end{lemma}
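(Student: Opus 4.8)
The plan is to compute the Fourier coefficient $\widehat{g}(w)$ directly from the definition and reduce it to $\widehat{f}$ by a change of variables. Starting from
\[
\widehat{g}(w) = \frac{1}{2^n}\sum_{x\in\Z_2^n}(-1)^{wx^t + g(x)}
= \frac{1}{2^n}\sum_{x\in\Z_2^n}(-1)^{wx^t + f(xA+b)},
\]
I would substitute $y := xA + b$, so that $x = (y+b)A^{-1} = yA^{-1} + bA^{-1}$. Since $A\in\GL(n,\Z_2)$, this substitution is a bijection of $\Z_2^n$, so the sum over $x$ becomes a sum over $y$. The exponent $wx^t$ becomes $w(yA^{-1} + bA^{-1})^t = w(A^{-1})^t y^t + w(A^{-1})^t b^t$ (using $(MN)^t = N^t M^t$ over any commutative ring). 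Writing $w' := w(A^{-1})^t$, the sum becomes
\[
\widehat{g}(w) = (-1)^{w(A^{-1})^t b^t}\cdot\frac{1}{2^n}\sum_{y\in\Z_2^n}(-1)^{w' y^t + f(y)}
= (-1)^{w(A^{-1})^t b^t}\,\widehat{f}\!\left(w(A^{-1})^t\right),
\]
which is exactly the claimed formula (over $\Z_2$ the sign of the exponent is irrelevant, so $(-1)^{-w(A^{-1})^t b}$ and $(-1)^{w(A^{-1})^t b^t}$ agree; I would just be mildly careful about the paper's row/column-vector transpose conventions so the expression matches verbatim).

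Having this identity, bentness of $g$ is immediate: for every $w$, $|\widehat{g}(w)| = |\widehat{f}(w(A^{-1})^t)| = 2^{-n/2}$ because $w\mapsto w(A^{-1})^t$ is a bijection of $\Z_2^n$ and $f$ is bent, so the spectrum of $g$ is again flat. Thus both assertions of the lemma follow from the single change-of-variables computation.

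I do not expect any real obstacle here — the proof is a one-line substitution. The only place requiring a little care is bookkeeping with transposes: because the paper treats all vectors as row vectors and writes the affine map as $x\mapsto xA+b$ (right multiplication), one must track that the induced map on frequencies is $w\mapsto w(A^{-1})^t$ and that the phase picked up is the pairing of this transformed frequency with $b$. Getting these transposes to land exactly as in the statement (rather than, say, $A^t$ versus $A^{-1}$, or $w$ acting on the left versus the right) is the only thing to watch, and it is routine.
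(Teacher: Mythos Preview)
Your proposal is correct and follows exactly the same route as the paper: a direct change of variables $y = xA + b$ in the defining sum for $\widehat{g}(w)$, pulling out the phase and recognizing $\widehat{f}(w(A^{-1})^t)$. Your added sentence deriving the bentness of $g$ from the bijectivity of $w\mapsto w(A^{-1})^t$ is a nice explicit touch that the paper's proof leaves implicit.
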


\begin{proof}
We compute $\widehat{g}(w)$ using the substitution $y=xA+b$
as follows:
\begin{eqnarray*}
\widehat{g}(w) & = & \frac{1}{2^n} \sum_x (-1)^{wx^t + f(xA+b)} \\
& = & \frac{1}{2^n} \sum_y (-1)^{w\cdot (A^{-1})^t (y-b)^t + f(y)} \\
& = & \frac{1}{2^n} (-1)^{-w(A^{-1})^t b} \sum_y (-1)^{w(A^{-1})^t y^t + f(y)} \\
& = & (-1)^{-w(A^{-1})^t b} \widehat{f}(w(A^{-1})^t).
\end{eqnarray*}
\vspace*{-0.2cm}
\end{proof}

By combining Theorem \ref{th:Dickson} and Lemma \ref{lem:aff} we
arrive at the following corollary which characterizes the class of
quadratic bent functions.

\begin{corollary}
  Let $f(x) = x Q x^t + Lx^t$ be a quadratic Boolean function such that
  the associated symplectic matrix $B=(Q+Q^t)$ satisfies ${\rm
    rank}(B)=2h=n$. Then $f$ is a bent function. The dual of this bent
  function is again a quadratic bent function.
\end{corollary}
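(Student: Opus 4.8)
The plan is to reduce $f$ to the inner product function by an invertible affine change of coordinates and then to feed everything through Lemma~\ref{lem:aff}. Since $B=Q+Q^t$ is a symplectic matrix of full rank $n=2h$, Dickson's theorem (Theorem~\ref{th:Dickson}) produces $R\in\GL(n,\Z_2)$ with $RBR^t=\onemat_h\otimes\sigma_x$ and, in the transformed coordinates, $f$ becomes $ip_h(x_1,\dots,x_{2h})+L'x^t$ for some $L'\in\Z_2^n$; there is no $\zeromat$-block since the rank is maximal, so all $n$ variables appear. Concretely this says $f(xR^{\varepsilon})=ip_h(x)+L'x^t$ for a suitable $\varepsilon\in\{+1,-1\}$ and a correspondingly adjusted $L'$; I would pin down the precise placement of $R$, $R^t$ and $R^{-1}$ when writing this out, because Theorem~\ref{th:Dickson} acts on row vectors by $x\mapsto xR$ whereas Lemma~\ref{lem:aff} is stated in the normal form $g(x)=f(xA+b)$.

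Next I would remove the linear term by a translation. Expanding $ip_h(x+b)=ip_h(x)+\sum_{i=1}^{h}\bigl(b_{2i}x_{2i-1}+b_{2i-1}x_{2i}\bigr)+\sum_{i=1}^{h}b_{2i-1}b_{2i}$ shows that, as $b$ runs over $\Z_2^n$, the linear part of $ip_h(x+b)$ runs over \emph{all} linear functions of $x$ (match $\ell_{2i-1}$ with $b_{2i}$ and $\ell_{2i}$ with $b_{2i-1}$), leaving a residual constant $c=\sum_i b_{2i-1}b_{2i}\in\Z_2$. Hence there is $b$ with $ip_h(x)+L'x^t=ip_h(x+b)+c$, so $f$ equals $ip_h$ composed with an invertible affine map of $\Z_2^n$, plus the constant $c$. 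Since $ip_h$ is bent (as already observed for the inner product function) and Lemma~\ref{lem:aff} gives $\widehat{g}(w)=\pm\,\widehat{ip_h}(w(A^{-1})^t)$ for any affine transform $g$, we obtain $|\widehat{f}(w)|=2^{-n/2}$ for all $w$ once we note that adding the constant $c$ only multiplies every Fourier coefficient by $(-1)^c$. Thus $f$ is bent.

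For the statement about the dual, first recall (or verify in two lines by factoring the defining Fourier sum over the $h$ coordinate pairs) that $\widetilde{ip}_h=ip_h$, which is a quadratic form. From the identity of Lemma~\ref{lem:aff}, writing $2^{n/2}\widehat{g}(w)=(-1)^{\widetilde g(w)}$ one reads off that the dual of $g(x)=f(xA+b)$ is $\widetilde g(w)=\widetilde f\bigl(w(A^{-1})^t\bigr)+w(A^{-1})^t b^t\pmod 2$, i.e. $\widetilde g$ is $\widetilde f$ precomposed with an invertible linear map plus a linear term; moreover the dual of $f+c$ is $\widetilde f+c$. Applying this with $f=ip_h$: the dual of our affine-plus-constant transform of $ip_h$ is $ip_h$ precomposed with a linear map (still a quadratic form) together with a linear term and a constant, hence again a quadratic Boolean function. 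Since the dual of any bent function is bent, $\widetilde f$ is a quadratic bent function, as claimed.

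The only genuinely unpleasant part is the index bookkeeping: tracking transposes and left- versus right-multiplication when composing Dickson's base change $R$ with the translation $b$ and matching the result against the $g(x)=f(xA+b)$ convention of Lemma~\ref{lem:aff}. Conceptually nothing is hard here; the single small non-formal step is the observation that every linear functional on $\Z_2^n$ occurs as the cross-term contribution of a translate of $ip_h$, which is immediate from the expansion above.
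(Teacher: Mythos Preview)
Your argument is correct and follows exactly the approach the paper indicates, namely combining Dickson's theorem (Theorem~\ref{th:Dickson}) with the affine-transform lemma (Lemma~\ref{lem:aff}); you have simply written out the details (absorbing the linear part into a translation, and reading off the dual from Lemma~\ref{lem:aff} together with $\widetilde{ip}_h=ip_h$) that the paper leaves implicit. The bookkeeping caveats you flag are real but routine and do not affect the validity of the proof.
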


A complete classification of all bent functions has only been achieved
for $n=2, 4$, and $6$ variables. For larger number of variables some
families are known, basically coming from ad hoc constructions. We
present another one of the known families called {\bf M} (Maiorana and
McFarland). First, we remark there are also constructions for making
new bent functions from known ones, the simplest one takes two bent
functions $f$ and $g$ in $n$ and $m$ variables and outputs $(x, y)
\mapsto f(x) \oplus g(y)$. The class {\bf M} of Maiorana-McFarland
bent functions consists of the functions $f(x,y) := x \pi(y)^t+g(y)$,
where $\pi$ is an arbitrary permutation of $\Z_2^n$ and $g$ is an
arbitrary Boolean function depending on $y$ only. The following lemma
characterizes the dual of a bent function in {\bf M}.

\begin{lemma}\label{lem:MM}
  Let $f(x,y) := x \pi(y)^t+g(y)$ be a Maiorana-McFarland bent
  function. Then the dual bent function of $f$ is given by
  $\widetilde{f}(x,y) = \pi^{-1}(x)y^t + g(\pi^{-1}(x))$.
\end{lemma}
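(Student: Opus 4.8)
The plan is to compute the Fourier transform of $f(x,y) = x\pi(y)^t + g(y)$ directly from the definition and recognize the result as $(-1)^{\widetilde f}$ for the claimed $\widetilde f$. Writing the frequency variable as a pair $(u,v) \in \Z_2^n \times \Z_2^n$, we have
\[
\widehat{f}(u,v) = \frac{1}{2^{2n}} \sum_{x,y \in \Z_2^n} (-1)^{ux^t + vy^t + x\pi(y)^t + g(y)}.
\]
First I would fix $y$ and perform the inner sum over $x$: since $ux^t + x\pi(y)^t = x(u+\pi(y))^t$, the sum over $x$ of $(-1)^{x(u+\pi(y))^t}$ equals $2^n$ if $\pi(y) = u$ and $0$ otherwise. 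Because $\pi$ is a permutation, there is exactly one $y$ with $\pi(y) = u$, namely $y = \pi^{-1}(u)$. Substituting this value collapses the double sum to a single term, giving $\widehat{f}(u,v) = 2^{-n}(-1)^{v\pi^{-1}(u)^t + g(\pi^{-1}(u))}$.

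Next I would read off the conclusion: the magnitude is $2^{-n} = 2^{-(2n)/2}$, which re-confirms bentness (already assumed in the statement, but it falls out for free), and $2^{(2n)/2}\widehat{f}(u,v) = (-1)^{v\pi^{-1}(u)^t + g(\pi^{-1}(u))}$, so by the definition of the dual bent function $\widetilde f(u,v) = v\,\pi^{-1}(u)^t + g(\pi^{-1}(u))$. Relabeling the frequency variables $(u,v)$ as $(x,y)$ to match the statement yields $\widetilde f(x,y) = \pi^{-1}(x)y^t + g(\pi^{-1}(x))$ exactly (using $v\pi^{-1}(u)^t = \pi^{-1}(u)v^t$ over $\Z_2$, i.e. symmetry of the dot product).

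I do not expect a genuine obstacle here; the proof is a one-line Fourier computation. The only points requiring a modicum of care are bookkeeping ones: keeping the normalization factor consistent with the $2^{-n/2}$ convention of the definition of bent (here $n$ is replaced by $2n$), being careful that the collapse of the sum uses surjectivity of $\pi$ for existence of a solution $y$ and injectivity for uniqueness, and the harmless transpose/symmetry juggling $v\pi^{-1}(u)^t = \pi^{-1}(u)v^t$ that matches the paper's row-vector notation. One could also note as a sanity check that applying the same formula to the dual (with permutation $\pi^{-1}$ and function $g\circ\pi^{-1}$, appropriately interpreted) returns $f$, consistent with $\widetilde{\widetilde f} = f$; this is optional but reassuring.
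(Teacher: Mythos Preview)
Your proposal is correct and follows essentially the same route as the paper: compute $\widehat{f}(u,v)$ by summing over $x$ first to produce a Kronecker delta $\delta_{u,\pi(y)}$, then substitute $y=\pi^{-1}(u)$ and read off the dual. The only cosmetic difference is that the paper writes the collapse via $\delta_{u,\pi(y)}$ rather than explicitly invoking injectivity/surjectivity of $\pi$, but the content is identical.
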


\begin{proof}
Let
$\widehat{f}(u,v)$ be the Fourier transform of $f$ at $(u,v)\in
\Z_2^{2n}$.  We obtain
\begin{eqnarray*}
\widehat{f}(u,v) & = & \frac{1}{2^{2n}}\sum_{x,y\in \Z_2^n} 
(-1)^{f(x,y) + (u,v)(x,y)^t} \\
&=& \frac{1}{2^{2n}} \sum_{x,y\in \Z_2^n}(-1)^{x\pi(y)^t+g(y)+(u,v)(x,y)^t}\\
&=& \frac{1}{2^{2n}} \sum_{y\in\Z_2^n}(-1)^{vy^t+g(y)}\left(\!\sum_{x\in\Z_2^n} (-1)^{(u+\pi(y))x^t}\!\right)\\
&=& \frac{1}{2^{n}} \sum_{y\in\Z_2^n}(-1)^{vy^t+g(y)} \delta_{u,\pi(y)} \\
&=& \frac{1}{2^{n}} (-1)^{v\pi^{-1}(u)^t+g(\pi^{-1}(u))}.
\end{eqnarray*}
Hence the dual bent function is given by
$\widetilde{f}(x,y)=\pi^{-1}(x)y^t + g(\pi^{-1}(x))$.
\end{proof}

Another class of bent functions called {\bf PS} (partial spreads) was
introduced by Dillon \cite{Dillon:75} and provides examples of bent
functions outside of {\bf M}.

\begin{theorem} \cite{Dillon:75} \label{partialSpread} Let $U_1,
  \ldots, U_{2^{n/2-1}}$ be $n/2$-dimensional subspaces of $\Z_2^{n}$
  such that $U_i\cap U_j = \{ 0 \}$ holds for all $i\not=j$. Let
  $\chi_i$ be the characteristic function of $U_i$.  Then $f :=
  \sum_{i=1}^{2^{n/2-1}} \chi_i$ is a bent function.
\end{theorem}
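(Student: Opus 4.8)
Write $m=n/2$. The plan is to compute the Fourier spectrum of $f$ directly and show $|\widehat{f}(w)|=2^{-m}$ for every $w$. First I would reduce $f$ to an indicator function: since the $U_i$ pairwise meet only in $0$, every nonzero $x$ lies in at most one $U_i$ while $0$ lies in all $2^{m-1}$ of them, so $\sum_i\chi_i$ reduced modulo $2$ agrees with the indicator $\mathbf{1}_E$ of $E:=\bigcup_{i=1}^{2^{m-1}}(U_i\setminus\{0\})$, a disjoint union, whence $|E|=2^{m-1}(2^m-1)$; for $n\ge 4$ the count $2^{m-1}$ is even so the origin contributes $0$ as it should, and for $n=2$ one simply reads $f$ as $\mathbf{1}_E$. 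Using $(-1)^{f(x)}=1-2\mathbf{1}_E(x)$ together with the orthogonality relation $\frac{1}{2^n}\sum_x(-1)^{wx^t}=\delta_{w,0}$ gives
\[
\widehat{f}(w)=\delta_{w,0}-\frac{2}{2^n}\sum_{i=1}^{2^{m-1}}\Big(\sum_{x\in U_i}(-1)^{wx^t}-1\Big),
\]
so the statement reduces to the character sum of $x\mapsto(-1)^{wx^t}$ over the subgroup $U_i$, which equals $2^m$ when $w$ lies in the dual subspace $U_i^{\perp}=\{u:ux^t=0\ \forall x\in U_i\}$ and $0$ otherwise.

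The key step is to observe that the duals $U_i^{\perp}$ again form a partial spread: each has dimension $n-m=m$, and $U_i\cap U_j=\{0\}$ forces $U_i+U_j=\Z_2^n$ by a dimension count, hence $U_i^{\perp}\cap U_j^{\perp}=(U_i+U_j)^{\perp}=\{0\}$ for $i\neq j$. Therefore, for each fixed $w\neq 0$ the quantity $N(w):=\#\{i:w\in U_i^{\perp}\}$ is either $0$ or $1$. Substituting this into the displayed formula, for $w\neq 0$ one gets $\widehat{f}(w)=-\frac{2}{2^n}\big(2^m N(w)-2^{m-1}\big)=\frac{1-2N(w)}{2^m}$, which is $+2^{-m}$ when $N(w)=0$ and $-2^{-m}$ when $N(w)=1$; and $\widehat{f}(0)=2^{-n}\big(2^n-2|E|\big)=2^{-n}\big(2^n-2^m(2^m-1)\big)=2^{-m}$ directly. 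In every case $|\widehat{f}(w)|=2^{-n/2}$, so $f$ is bent.

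The only point that needs care — though I would not call it a serious obstacle — is the bookkeeping at the origin together with the specific cardinality $2^{n/2-1}$. That count is precisely what makes the $w=0$ coefficient equal $+2^{-n/2}$ rather than something larger (a bigger partial spread overshoots there), and its parity is what lets us discard the origin's multiplicity when passing to $\mathbf{1}_E$. Everything else is the standard subgroup character-sum identity and the elementary duality $(U_i+U_j)^{\perp}=U_i^{\perp}\cap U_j^{\perp}$, so once the partial-spread property is transported to the dual spaces the argument is essentially a one-line Fourier computation.
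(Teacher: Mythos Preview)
The paper does not actually prove this theorem; it is stated with a citation to Dillon \cite{Dillon:75} and no argument is given in the text. So there is nothing to compare your proof against on the paper's side.

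That said, your argument is correct and is essentially the classical proof of Dillon's result: reduce to the indicator of $E=\bigcup_i(U_i\setminus\{0\})$, evaluate the subgroup character sums $\sum_{x\in U_i}(-1)^{wx^t}\in\{0,2^m\}$, and use that the $U_i^{\perp}$ again form a partial spread so that a nonzero $w$ lies in at most one $U_i^{\perp}$. The bookkeeping at $w=0$ and $w\neq 0$ is exactly right.

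One small remark on the $n=2$ edge case you flagged: with a single $1$-dimensional subspace $U_1$, the literal sum $\sum_i\chi_i=\chi_1$ has $\chi_1(0)=1$, and this function is \emph{not} bent (its zero-frequency coefficient vanishes). The statement is really about the Dillon ${\bf PS}^-$ function $\mathbf{1}_E$, which for $n\geq 4$ coincides with $\sum_i\chi_i$ modulo $2$ because $2^{m-1}$ is even, but for $n=2$ differs from $\chi_1$ at the origin. Your sentence ``for $n=2$ one simply reads $f$ as $\mathbf{1}_E$'' is therefore the right move, though it is a reinterpretation of the theorem's formula rather than a consequence of it; you might say this explicitly. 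Beyond that cosmetic point the proof is complete.
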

A collection of sets $U_i$ as in Theorem \ref{partialSpread} is called
a {\em partial spread}. Explicitly, the $U_i$ can be chosen as $U_i =
\{ (x, a_i x) : x \in \F_{2^{n/2}}\}$ where $a_i \in
\F_{2^{n/2}}^\times$ satisfies $g(a_i)=1$ for a fixed balanced
function $g$. Here we have identified $\Z_2^n$ with the finite field
$\F_{2^n}$ by choosing a polynomial basis.  This provides an explicit
construction for bent functions in ${\bf PS}$. A further class defined
by Dobbertin has the property to include ${\bf M}$ and ${\bf PS}$ is
defined as follows: first, identify $\Z_2^n$ with $\F_{2^{n/2}} \times
\F_{2^{n/2}}$. Let $g$ be a balanced Boolean function of $n/2$
variables, $\varphi$ be a permutation of $\F_{2^{n/2}}$ and $\psi$ be
an arbitrary map from $\F_{2^{n/2}}$ to $\F_{2^{n/2}}$.  Then
\[ 
f(x,y) := \left\{ \begin{array}{c@{\;:\;}l} 
g\left(\frac{x+\psi(\varphi^{-1}(y))}{\varphi^{-1}(y)}\right) 
 & \mbox{if}\; y \not= 0, \\
    0 & \mbox{if}\; y = 0
\end{array}
\right.
\]
is a bent function. 

There are other constructions of bent functions by means of so-called
trace monomials. For this connection, an understanding of certain
Kloosterman sums turns out to be important.  Recall that the
Kloosterman sum in $\F_{2^n}$ is defined as $Kl(a) = \sum_{x \in
  \F_{2^n}^\times} (-1)^{{\rm tr}(x^{-1}+ax)}$, where
$\F_{2^n}^\times$ denotes the non-zero elements of $\F_{2^n}$ and
${\rm tr}$ denotes the trace map from $\F_{2^n}$ to $\Z_2$. For $a \in
\F_{2^n}$ let $f_a(x)$ be the Boolean function $f_a(x) = {\rm tr}(a
x^{2^{n/2-1}})$. It is known that if $a$ is contained in the subfield
$\F_{2^{n/2}}$ and $Kl(a)=-1$, then $f_a$ is a bent function
\cite{Dillon:75}. The existence of such an element $a$ was conjectured
in Dillon's paper and was proved in \cite{LW:90} (see also
\cite{HW:99}) where its existence was shown for all $n$, thereby
showing existence bent functions in this class of trace monomials.

\subsection{Other characterizations of bent functions}

Finally, we note that there are many other characterizations of bent
functions via other combinatorial objects, in particular difference
sets. The connection is rather simple: we get that $D_f := \{x :
f(x)=1\}$ is a difference set in $\Z_2^n$, i.\,e., the set $\Delta D_f
= \{d_1-d_2: d_1,d_2\in D_f\}$ of differences covers each non-zero
element of $\Z_2^n$ an equal number of times.  We briefly highlight
some other connections to combinatorial objects in the following:

\paragraph{Circulant Hadamard matrices.} Bent functions give rise to
  Hadamard matrices of size $2^n \times 2^n$ in a very natural way as
  group circulants as follows. Let $A_f := ((-1)^{f(x+y)})_{x,y \in
    \Z_2^n}$, then $f$ is bent if and only if $A_f$ is a Hada\-mard
  matrix, i.\,e, $A_f A_f^\dagger = n \onemat_n$. Another way of
  saying this is that the shifted functions $x \mapsto (-1)^{f(x+s)}$
  for $s\in \Z_2^n$ are orthogonal. Moreover, in the basis given by
  the columns of $H_{2^n}$ the matrix $A_f$ becomes diagonal, the
  diagonal entries being $\widetilde{f}(x)$.
\paragraph{Balanced derivatives.} Besides the property of $A_f$ being
  a Hadamard matrix another equivalent characterizations of $f$ to be
  bent is that the function $\Delta_h(f) := f(x+h) + f(x)$ is a
  balanced Boolean function (i.\,e., $f$ takes $0$ and $1$ equally
  often) for all non-zero $h$.
\paragraph{Reed-Muller codes.} Bent functions can also be
  characterized in terms of the Reed-Muller codes \cite{MS:77}. Recall
  that the set of all truth tables (evaluations) of all polynomials
  over $\Z_2$ of degree up to $r$ in $n$ variables is called the
  Reed-Muller $R(n,r)$. Then bent functions correspond to functions
  which have the maximum possible distance to all linear functions,
  i.\,e., elements of $R(n,1)$. Quadratic bent functions in $R(n,2)$
  are of particular interest. They correspond to symplectic forms of
  maximal rank and play a role, e.\,g., in the definition of the
  Kerdock codes.
\paragraph{Difference sets.} Finally, we note that bent functions are
  equivalent to objects known as difference sets in combinatorics,
  namely difference sets for the elementary abelian groups $\Z_2^n$
  \cite{BJL1:99}. A difference set is defined as follows: Let $G$ be a
  finite group of order $v=|G|$. A $(v,k,\lambda)$-difference set in
  $G$ is a subset $D \subseteq G$ such that the following properties
  are satisfied: $|D|=k$ and the set $\Delta D = \{a-b : a,b \in D,
  a\not=b \}$ contains every element in $G$ precisely $\lambda$ times.
  Examples for difference sets are for instance the set $D=\{x^2 : x
  \in \F_q\}$ of all squares in a finite field. Here the group $G$ is
  the additive group of $\F_q$, where $q\equiv 3 \, ({\rm mod} \; 4)$
  is a prime power. The parameters of this family of difference sets
  is given by $(q, \frac{q-1}{2}, \frac{q-3}{4})$. Bent functions on
  the other hand give rise to difference sets in the elementary
  abelian group $G=\Z_2^n$. The connection is as follows: $D_f := \{x
  : f(x) = 1\}$ is a difference set in $\Z_2^n$ if and only if $f$ is
  a bent function, a result due to Dillon \cite{Dillon:75}. In this
  fashion we obtain $(2^n, 2^{n-1}\pm 2^{(n-2)/2}, 2^{n-2}\pm
  2^{(n-2)/2})$ difference sets in $\Z_2^n$, see also \cite{BJL1:99}.

\section{Quantum algorithms for the shifted bent function problem}

We introduce the hidden shift problem for Boolean functions. In
general, the hidden shift problem is a quite natural source of
problems for which a quantum computer might have an advantage over a
classical computer.  See \cite{CvD:2009} for more background on hidden
shifts and related problems. 

\begin{definition}[Hidden shift problem]
  Let $n \geq 1$ and let ${\cal O}_{f}$ be an oracle which gives
  access to two Boolean functions $f,g : \Z_2^n \rightarrow \Z_2$ such
  that the following conditions hold: (i) $f$, and $g$ are bent
  functions, and (ii) there exist $s\in \Z_2^n$ such that
  $g(x)=f(x+s)$ for all $x\in \Z_2^n$. We then say that ${\cal O}_{f}$
  hides an instance of a shifted bent function problem for the bent
  function $f$ and the hidden shift $s \in \Z_2^n$. If in addition to
  $f$ and $g$ the oracle also provides access to the dual bent
  function $\widetilde{f}$, then we use the notation ${\cal
    O}_{f,\widetilde{f}}$ to indicate this potentially more powerful
  oracle.
\end{definition}

\begin{theorem}\label{th:alg1}
  Let ${\cal O}_{f,\widetilde{f}}$ be an oracle that hides an instance
  of a shifted bent function problem for a function $f$ and hidden
  shift $s$ and provides access to the dual bent function
  $\widetilde{f}$. Then there exists a polynomial time quantum
  algorithm ${\cal A}_1$ that computes $s$ with zero error and makes
  two quantum queries to ${\cal O}_{f, \widetilde{f}}$.
\end{theorem}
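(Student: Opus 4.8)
The plan is to exploit the duality relation $2^{n/2}\widehat{f}(w)=(-1)^{\widetilde{f}(w)}$ for bent functions together with the elementary fact that a shift of the argument of $f$ becomes a linear phase under a Walsh--Hadamard transform. The algorithm ${\cal A}_1$ I would use makes exactly two oracle calls---one to $g$ and one to $\widetilde{f}$---and each is a phase query, i.e.\ implemented by appending an ancilla in state $\ket{-}$, applying the oracle map $\ket{x}\ket{b}\mapsto\ket{x}\ket{b\oplus h(x)}$ for the relevant $h\in\{g,\widetilde{f}\}$, and discarding the ancilla, which installs the phase $(-1)^{h(x)}$ at the cost of one query.

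Concretely: start from $\ket{0}$ and apply $H_{2^n}$ to get the uniform superposition $\frac{1}{\sqrt{2^n}}\sum_{x\in\Z_2^n}\ket{x}$; phase-query $g$ to reach $\frac{1}{\sqrt{2^n}}\sum_x(-1)^{f(x+s)}\ket{x}$ (first query); apply $H_{2^n}$, at which point the substitution $y=x+s$ together with $\widehat{f}(w)=2^{-n/2}(-1)^{\widetilde{f}(w)}$ shows the state has become $\frac{1}{\sqrt{2^n}}\sum_w(-1)^{ws^t+\widetilde{f}(w)}\ket{w}$; phase-query $\widetilde{f}$ to cancel the $\widetilde{f}(w)$ term and obtain the ``linear'' state $\frac{1}{\sqrt{2^n}}\sum_w(-1)^{ws^t}\ket{w}$ (second query); and finally apply $H_{2^n}$ once more, which by orthogonality of the characters collapses the state to exactly $\ket{s}$, read off by a computational-basis measurement. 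Every intermediate state is exact, so the algorithm has zero error, and the resource count is immediate: three $O(n)$-size Walsh--Hadamard transforms, two queries, and one measurement, giving polynomial---in fact $O(n)$---running time outside the oracle.

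The only computation requiring care is the phase bookkeeping in the third step: replacing $x$ by $y=x+s$ in $\sum_x(-1)^{wx^t+f(x+s)}$ pulls out the factor $(-1)^{ws^t}$ and leaves $2^n\widehat{f}(w)=2^{n/2}(-1)^{\widetilde{f}(w)}$, which is the whole point. This is the same three-Fourier-transform ``deconvolution'' idea used for $ip_n$ earlier and, in a different setting, in \cite{vDHI:2003}, now run for an arbitrary bent $f$ using only the abstract property that $\widehat{f}$ is flat and $\pm 2^{-n/2}$-valued. I do not expect a genuine obstacle here; the single conceptual point is that the Walsh--Hadamard transform of $(-1)^{f(\cdot)}$ is again, up to the scalar $2^{-n/2}$, the sign function of a Boolean function, namely $\widetilde{f}$, and this is exactly what lets the Fourier-domain phase be uncomputed by a \emph{single} query to $\widetilde{f}$, so that two queries suffice---it is precisely here that the hypothesis ``$f$ bent'' is used. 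Note also that the algorithm never queries $f$ itself; only $g$ and $\widetilde{f}$ are needed.
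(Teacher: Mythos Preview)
Your proposal is correct and is essentially identical to the paper's own proof: both give the same six-step algorithm (prepare $\ket{0}$, Hadamard, phase-query $g$, Hadamard, phase-query $\widetilde{f}$, Hadamard, measure), justify the crucial third Hadamard step via the shift-to-phase identity together with $\widehat{f}(w)=2^{-n/2}(-1)^{\widetilde{f}(w)}$, and arrive at the same $O(n)$ time and two-query count. Your write-up is in fact slightly more explicit than the paper's about the phase-kickback implementation and about where exactly bentness is invoked.
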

\begin{proof}
  Let $f:\Z_2^n \rightarrow \Z_2$ be the bent function. We have oracle
  access to the shifted function $g(x)=f(x+s)$ via the oracle, i.\,e.,
  we can apply the map $\ket{x}\ket{0} \mapsto \ket{x}\ket{f(x+s)}$
  where $s \in \Z_2^n$ is the unknown string.  Recall that whenever we
  have a function implemented as $\ket{x}\ket{0}\mapsto
  \ket{x}\ket{f(x)}$, we can also compute $f$ into the phase as $U_f:
  \ket{x} \mapsto (-1)^{f(x)}\ket{x}$ by applying $f$ to a qubit
  initialized in $\frac{1}{\sqrt{2}}(\ket{0}-\ket{1})$.  The hidden
  shift problem is solved by the following algorithm ${\cal A}_1$: (i)
  Prepare the initial state $\ket{0}$, (ii) apply the Fourier
  transform $H_2^{\otimes n}$ to prepare an equal superposition
  $\frac{1}{\sqrt{2^n}}\sum_{x\in \Z_2^n} \ket{x}$ of all inputs,
  (iii) compute the shifted function $g$ into the phase to get
  $\frac{1}{\sqrt{2^n}}\sum_{x\in \Z_2^n} (-1)^{f(x+s)}\ket{x}$, (iv)
  Apply $H_2^{\otimes n}$ to get $\sum_w (-1)^{sw^t} \widehat{f}(w)
  \ket{w} = \frac{1}{\sqrt{2^n}} \sum_w (-1)^{sw^t}
  (-1)^{\widetilde{f}(w)} \ket{w}$, (v) compute the function $\ket{w}
  \mapsto (-1)^{\widetilde{f}(w)}\ket{w}$ into the phase resulting in
  $\frac{1}{\sqrt{2^n}}\sum_w (-1)^{sw^t} \ket{w}$, where we have used
  the fact that $f$ is a bent function, and (vi) finally apply another
  Hadamard transform $H_2^{\otimes n}$ to get the state $\ket{s}$ and
  measure $s$. From this description it is clear that we needed one
  query to $g$ and one query to $\widetilde{f}$ to solve the problem,
  that the algorithm is exact, and that the overall running time is
  given by $O(n)$ quantum operations. A quantum circuit implementing
  this algorithm is shown in Figure \ref{fig:algs}(a).
\end{proof}

Next, we consider the situation where the oracle defines a hidden shift
problem but does not provide access to the dual bent function. It 
turns out that in this case we can still extract the hidden shift
with a polynomial time quantum algorithm, however the number of 
queries increases from constant to linear. 

\begin{theorem}\label{th:alg2}
  Let ${\cal O}_{f}$ be an oracle that hides an instance of a shifted
  bent function problem for a function $f$ and hidden shift $s$. Then
  there exists a polynomial time quantum algorithm ${\cal A}_2$ that
  computes $s$ with constant probability of success and makes $O(n)$
  queries to ${\cal O}_{f}$.
\end{theorem}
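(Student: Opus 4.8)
The plan is to carry out the idea behind algorithm ${\cal A}_1$ — Fourier-sample the shifted bent function — but, lacking a query to the dual $\widetilde f$, to remove the obstructing phase pattern $(-1)^{\widetilde f(w)}$ by a SWAP-test–style interference between $f$ and its shifted copy followed by one extra query to $f$. This combination cancels the bad phases for free, at the price that a single run only produces one random linear constraint on $s$. I therefore repeat the procedure $O(n)$ times, with an independently randomized shift offset in each round, and recover $s$ by linear algebra.

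In one round, draw $a\in\Z_2^n$ uniformly at random and set $g'(x):=g(x+a)=f(x+s+a)$, which costs one query to $g$ and a known translation $X^a$. Using one controlled query to $f$ and one controlled query to $g$, prepare $\tfrac1{\sqrt2}\bigl(\ket0\ket{\psi_f}+\ket1\ket{\psi_{g'}}\bigr)$ with $\ket{\psi_h}=\tfrac1{\sqrt{2^n}}\sum_x(-1)^{h(x)}\ket x$. Whenever $s+a\neq0$ the Boolean function $\Delta_{s+a}f:=f(x)+f(x+s+a)$ is balanced (balanced-derivative characterization of bentness), so $\ket{\psi_f}\perp\ket{\psi_{g'}}$; applying $H_2$ to the control qubit and measuring it yields a uniform bit $\epsilon$ and collapses the data register onto $\tfrac1{\sqrt{2^{n-1}}}\sum_{x\in S_\epsilon}(-1)^{f(x)}\ket x$, where $S_\epsilon=\{x:\Delta_{s+a}f(x)=\epsilon\}$ has size $2^{n-1}$. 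The crucial step is to now apply $U_f$ (one further query to $f$): this cancels the signs $(-1)^{f(x)}$ exactly and leaves the \emph{uniform} superposition $\tfrac1{\sqrt{2^{n-1}}}\sum_{x\in S_\epsilon}\ket x$. Since replacing $x$ by $x+s+a$ fixes the defining condition of $S_\epsilon$ (as $2(s+a)=0$), this state is invariant under the shift $X^{s+a}$; conjugating by $H_2^{\otimes n}$ turns this into invariance under $Z^{s+a}$, so after a final Hadamard transform the state is supported entirely on the hyperplane $(s+a)^{\perp}=\{w:(s+a)w^t=0\}$. Measuring produces a $w$ with $(s+a)w^t=0$, i.e.\ the equation $sw^t=aw^t$ whose right-hand side is known; if $w=0$ the round is discarded. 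Each round uses three oracle queries.

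Finally I must check that $O(n)$ rounds determine $s$. A short calculation — the amplitude of an outcome $w\neq0$ equals $\tfrac{(-1)^\epsilon}{\sqrt2}\,\widehat{\Delta_{s+a}f}(w)$; squaring and averaging over $a$, and using the balanced-derivative identity $\sum_{t\in\Z_2^n}\widehat{\Delta_tf}(w)^2=1$ (valid for every $w$), shows that, conditioned on $w\neq0$, the vector $w$ is \emph{exactly} uniformly distributed over $\Z_2^n\setminus\{0\}$, the averaging over $a$ washing out any concentration of $\widehat{\Delta_{s+a}f}$. Since $w=0$ occurs with probability $\tfrac12-o(1)$, after $O(n)$ rounds we obtain $\Theta(n)$ independent uniform nonzero vectors $w_i$ together with the bits $a_iw_i^t$, and a union bound over the $2^n-1$ hyperplanes shows the $w_i$ span $\Z_2^n$ except with probability $\le 2^{-n}$. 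The linear system $\{sw_i^t=a_iw_i^t\}_i$ then has $s$ as its unique solution, found by Gaussian elimination, and the whole procedure runs in time $O(n)$ plus $O(n^3)$ classical post-processing. I expect the main obstacle to be precisely the point where this argument parts ways with ${\cal A}_1$: confirming that the single post-interference application of $U_f$ genuinely eliminates the dual bent function (rather than re-introducing an uncontrolled phase), and that randomizing the offset $a$ really does restore enough spread of the measured $w$ for $O(n)$ samples to suffice for \emph{all} bent $f$ — the quadratic case, where $\Delta_sf$ is linear and $\widehat{\Delta_sf}$ is supported on a single point, shows that fixing $a=0$ would not work.
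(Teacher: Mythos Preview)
Your argument is correct and does prove the theorem, but it follows a genuinely different route from the paper.

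The paper reduces the problem to an abelian hidden subgroup problem over $\Z_2^{\,n+1}$. It introduces ``quantum functions'' $F:x\mapsto\sum_y(-1)^{f(x+y)}\ket y$ and $G:x\mapsto\sum_y(-1)^{g(x+y)}\ket y$; bentness (equivalently, all nonzero derivatives of $f$ being balanced) makes these maps injective in the sense that distinct inputs yield orthogonal states. Gluing $F$ and $G$ by an extra control qubit produces a hiding function for the order-$2$ subgroup $\{(0,0),(1,s)\}\leq\Z_2^{\,n+1}$, and the standard abelian HSP routine then returns in each round a uniformly random $(c,w)\in\Z_2^{\,n+1}$ with $c=sw^t$. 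No randomized offset is needed: the uniformity of the sampled $w$ over all of $\Z_2^n$ comes for free from the HSP analysis, so $O(n)$ rounds with two oracle calls each suffice.

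Your approach instead stays on a single data register, uses an interference step plus one further application of $U_f$ to strip the sign $(-1)^{f(x)}$, and Fourier-samples the uniform superposition over $S_\epsilon$. The resulting $w$ lies in $(s+a)^\perp$ and, as you correctly note for quadratic $f$, can be concentrated on a single point for a fixed offset; the randomized $a$ together with the identity $\sum_t\widehat{\Delta_t f}(w)^2=1$ (which itself relies on bentness) is essential to restore uniformity on average. This costs one extra query per round and an additional averaging lemma, but has the virtue of being self-contained: it avoids the HSP abstraction and the ``injective quantum function'' device, deriving everything from the balanced-derivative characterization and a direct Fourier computation.
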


\begin{proof}
First, note that as in Theorem \ref{th:alg1} we can assume that the
oracle computes the functions $f, g:\Z_2^n \rightarrow \Z_2$ into the
phase.  Furthermore, we can assume that the oracle can be applied
conditionally on a bit $b$, i.\,e., we can apply the map
$\Lambda_1(U_f): \ket{b}\ket{x} \mapsto \ket{b}\ket{x}$ if $b=0$ and
$\ket{b}\ket{x} \mapsto \ket{b}(-1)^{f(x)}\ket{x}$ if $b=1$. Indeed,
using a Fredkin gate {\sc Fred} (see \cite{NC:2000}) which specified
by $\ket{b}\ket{x}\ket{y} \mapsto \ket{b}\ket{x}\ket{y}$ if $b=0$ and
$\ket{b}\ket{x}\ket{y} \mapsto \ket{b}\ket{y}\ket{x}$ if $b=1$, it is
easy to implement $\Lambda_1(U_f)$ as follows: $(\Lambda_1(U_f)\otimes
\onemat_{2^n}) \ket{b}\ket{x}\ket{0} = (\mbox{{\sc Fred}} \circ
(\onemat_2 \otimes U_f \otimes \onemat_{2^n})\circ \mbox{{\sc Fred}})
\ket{b}\ket{x}\ket{0}$, up to a global phase.

We prove the theorem by reducing to an abelian hidden subgroup problem
in the group $\Z_2^{n+1}$. To do this, we use $f$ and $g$ to define
``quantum functions'', namely $F: x \mapsto \sum_{y\in \Z_2^n}
(-1)^{f(x+y)} \ket{y}$ and $G: x \mapsto \sum_{y\in \Z_2^n}
(-1)^{g(x+y)} \ket{y}$. Observe that due to the bentness of $f$ and
$g$, the two functions $F$ and $G$ are injective quantum functions,
i.\,e., they are injective complex valued functions that with respect
to some basis, which in general might be different from the
computational basis, become classical injective functions. Indeed,
this follows from the fact that all derivatives of a bent function are
balanced, see Section \ref{sec:MMclass}. Now, a well known connection
between the hidden shift problem for injective functions $f$, $g$ over
an abelian group $A$ and a hidden subgroup problem can be used
\cite{Kuperberg:2005,FIMSS:2003}. For this, the hidden subgroup
problem is defined with respect to the semidirect product $A \rtimes
\Z_2$ where the action is given by inversion in $A$. In our case we
have $A \rtimes \Z_2 \cong Z_2^{n+1}$ since the inversion action is
trivial over $\Z_2$.  The hiding function for the HSP over $Z_2^{n+1}$
is defined as $H(b,x)=F(x)$ if $b=0$ and $H(b,x)=G(x)$ if $b=1$. This
defines a hidden subgroup $\{(0,0), (1,s)\}$ of order $2$, knowledge
of which clearly implies that we know $s$. Once we have shown how to
implement the hiding function $H$, the algorithm will therefore be the
standard algorithm for the HSP: (i) Prepare the initial state
$\ket{0}$, (ii) apply the Fourier transform $H_2^{\otimes n}$ to
prepare an equal superposition $\frac{1}{\sqrt{2^{n+1}}}\sum_{b\in
  \Z_2, x\in \Z_2^n} \ket{x}$ of all inputs, (iii) compute the
function into the second register to get
$\frac{1}{\sqrt{2^{n+1}}}\sum_{b \in \Z_2, x\in \Z_2^n}
\ket{b,x}\ket{H(b,x)}$, (iv) Apply $H_2^{\otimes n+1}$ to the first
register, and (v) measure the first register. This leads to a
measurement result $a \in \Z_2^{n+1}$ that satisfies $(1,s) a^t = 0$.
Repeating steps (i)-(v) a total number of $O(n)$ times, we get a
constant probability to uniquely characterize $s$ from the measurement
data.  Hence, the algorithm needs $O(n)$ queries to $f$ and $g$ to
solve the problem and the overall running time is given by $O(n^2)$
quantum operations. The function $H(b,x)$ can be implemented in a
straightforward way using Hadamard transforms, controlled NOT
operations \cite{NC:2000}, and the controlled oracle calls
$\Lambda_1(U_f)$ mentioned above. A quantum circuit implementing one
iteration of this algorithm is shown in Figure \ref{fig:algs}(b).
\end{proof}

It is perhaps interesting to note that the ``probabilistic method'' of
directly implementing $\widetilde{f}$ via sampling of $f$ at a
polynomial number of inputs and using the Chernoff bound is not
sufficient for our purposes (see e.\,g., \cite{Mansour:94} for the
argument that $\sum_{i \in I} \chi_S(x_i) f(x_i)$ is exponentially
close to $\widetilde{f}$ for all $S$ for a sample set $I$ of
polynomial size). The issue is that for bent functions we would have
to distinguish exponentially small Fourier coefficients $\pm
1/\sqrt{2^n}$. We conjecture that in the worst case it takes an
exponential number of queries to $f$ in order to implement one query
to $\widetilde{f}$, but have no proof for this.

\newcommand{\goodgap}{%
\hspace{\subfigtopskip}\qquad\quad%
\hspace{\subfigbottomskip}
}

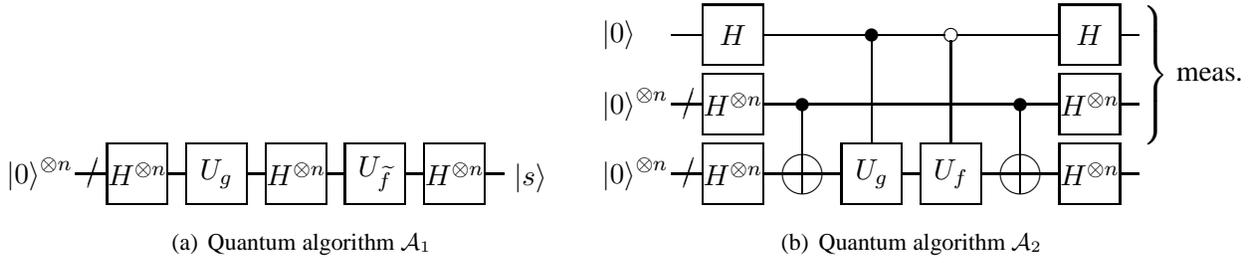
\begin{figure*}
\centering
\subfigure[Quantum algorithm ${\cal A}_1$]{
\unitlength0.75pt%
\hspace*{-0.5cm}
\begin{picture}(10,20)(0,-20)
\put(-25,0){\makebox(0,0)[l]{$\ket{0}^{\otimes n}$}}
\end{picture}%
\begin{picture}(16,20)(0,-20)
\put(0,0){\line(1,0){16}}
\put(6,-7){\line(1,3){5}}
\end{picture}%
\begin{picture}(30,20)(0,-20)
\put(0,-15){\framebox(30,30){$H^{\otimes n}$}}
\end{picture}%
\begin{picture}(10,20)(0,-20)
\put(0,0){\line(1,0){10}}
\end{picture}%
\begin{picture}(30,20)(0,-20)
\put(0,-15){\framebox(30,30){$U_g$}}
\end{picture}%
\begin{picture}(10,20)(0,-20)
\put(0,0){\line(1,0){10}}
\end{picture}%
\begin{picture}(30,20)(0,-20)
\put(0,-15){\framebox(30,30){$H^{\otimes n}$}}
\end{picture}%
\begin{picture}(10,20)(0,-20)
\put(0,0){\line(1,0){10}}
\end{picture}%
\begin{picture}(30,20)(0,-20)
\put(0,-15){\framebox(30,30){$U_{\widetilde{f}}$}}
\end{picture}%
\begin{picture}(10,20)(0,-20)
\put(0,0){\line(1,0){10}}
\end{picture}%
\begin{picture}(30,20)(0,-20)
\put(0,-15){\framebox(30,30){$H^{\otimes n}$}}
\end{picture}%
\begin{picture}(10,20)(0,-20)
\put(0,0){\line(1,0){10}}
\put(7,-7){\line(1,2){6}}
\end{picture}%
\begin{picture}(10,20)(0,-20)
\put(5,-2){\makebox(0,0)[l]{{$\ket{s}$}}}
\end{picture}%
}\goodgap
\subfigure[Quantum algorithm ${\cal A}_2$]{
\unitlength0.75pt%
\begin{picture}(10,80)(0,-20)
\put(-25,0){\makebox(0,0)[l]{$\ket{0}^{\otimes n}$}}
\put(-25,35){\makebox(0,0)[l]{$\ket{0}^{\otimes n}$}}
\put(-25,70){\makebox(0,0)[l]{$\ket{0}$}}
\end{picture}%
\begin{picture}(16,80)(0,-20)
\put(0,0){\line(1,0){16}}
\put(6,-7){\line(1,3){5}}
\put(0,35){\line(1,0){16}}
\put(6,28){\line(1,3){5}}
\put(0,70){\line(1,0){16}}
\end{picture}%
\begin{picture}(30,80)(0,-20)
\put(0,-15){\framebox(30,30){$H^{\otimes n}$}}
\put(0,20){\framebox(30,30){$H^{\otimes n}$}}
\put(0,55){\framebox(30,30){$H$}}
\end{picture}%
\begin{picture}(40,80)(0,-20)
\multiput(0,0)(0,35){3}{\line(1,0){40}}
\put(20,35){\circle*{6}}
\put(20,35){\line(0,-1){45}}
\put(20,0){\circle{20}}
\end{picture}%
\begin{picture}(30,80)(0,-20)
\put(0,-15){\framebox(30,30){$U_g$}}
\put(0,35){\line(1,0){30}}
\put(0,70){\line(1,0){30}}
\put(15,70){\circle*{6}}
\put(15,70){\line(0,-1){55}}
\end{picture}%
\begin{picture}(10,80)(0,-20)
\put(0,0){\line(1,0){10}}
\put(0,35){\line(1,0){10}}
\put(0,70){\line(1,0){10}}
\end{picture}%
\begin{picture}(30,80)(0,-20)
\put(0,-15){\framebox(30,30){$U_f$}}
\put(0,35){\line(1,0){30}}
\put(0,70){\line(1,0){12}}
\put(18,70){\line(1,0){12}}
\put(15,70){\circle{6}}
\put(15,67){\line(0,-1){52}}
\end{picture}%
\begin{picture}(40,80)(0,-20)
\multiput(0,0)(0,35){3}{\line(1,0){40}}
\put(20,35){\circle*{6}}
\put(20,35){\line(0,-1){45}}
\put(20,0){\circle{20}}
\end{picture}%
\begin{picture}(30,80)(0,-20)
\put(0,-15){\framebox(30,30){$H^{\otimes n}$}}
\put(0,20){\framebox(30,30){$H^{\otimes n}$}}
\put(0,55){\framebox(30,30){$H$}}
\end{picture}%
\begin{picture}(10,80)(0,-20)
\put(0,0){\line(1,0){10}}
\put(0,35){\line(1,0){10}}
\put(0,70){\line(1,0){10}}
\end{picture}%
\begin{picture}(20,80)(0,-20)
\put(0,50){\makebox(0,0)[l]{$\left. \rule{0mm}{1cm} \right\}$}}
\put(15,50){\makebox(0,0)[l]{{ meas.}}}
\end{picture}%
}
\caption{\label{fig:algs} Quantum algorithms for the hidden shift
  problem for bent functions. The quantum circuit in (a) implements
  algorithm ${\cal A}_1$. This algorithm can be used if access to the
  shifted function $g(x) = f(x+s)$ as well as access to the dual bent
  function $\widetilde{f}$ is given.  The algorithm uses one query to
  $g$ and one query to $\widetilde{f}$ and is zero-error, i.\,e., it
  always returns the hidden shift $s$. The quantum circuit in (b)
  implements algorithm ${\cal A}_2$. This algorithm uses access to $f$
  and $g$ only and can be applied if access to $\widetilde{f}$ is not
  available. The shown circuit has to be applied $O(n)$ times, after
  which the data acquired by measuring the upper $n+1$ qubits
  characterizes the hidden shift $s$ with constant probability of
  success.}
\end{figure*}

Finally, we state the two results that provide new query complexity
separations between quantum and classical algorithms. Our main tool is
the Maiorana-McFarland class of bent functions which turns out to be
rich enough to prove the two results. First, we show that the
classical query complexity for the hidden shift problem over this
class of bent functions is of order $\Theta(n)$, while it can be
solved with $2$ quantum queries.

\begin{theorem}\label{th:superpoly}
  Let ${\cal O}_{f,\widetilde{f}}$ be an oracle that hides a hidden
  shift $s$ for an instance $(f, g, \widetilde{f})$ of a hidden shift
  problem for a bent function $f$ from Maiorana-McFarland class. Then
  classically $\Theta(n)$ queries are necessary and sufficient to
  identify the hidden shift $s$. Further, there exists a recursively
  defined oracle ${\cal O}_{rec}$ which makes calls to ${\cal
    O}_{f,\widetilde{f}}$ and whose quantum query complexity is
  $poly(n)$, whereas its classical query complexity is
  superpolynomial.
\end{theorem}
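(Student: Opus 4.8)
The plan is to prove this theorem in two parts, corresponding to the two claims. For the classical lower bound of $\Omega(n)$, I would use an information-theoretic argument: the hidden shift $s$ lives in $\Z_2^n$ and therefore carries $n$ bits of information, while each classical query to the oracle returns $O(1)$ bits (the values $f(x)$, $g(x)$, $\widetilde{f}(x)$ at a chosen point). Since $f$ and $\widetilde f$ are fixed and known, the only information about $s$ comes from comparing $g(x) = f(x+s)$ with $f$, and by balancedness of the derivatives of a bent function each such comparison eliminates at most a constant fraction of candidate shifts; a clean way to phrase this is via an adversary/fooling-set argument over the uniform distribution on $s$. The matching upper bound $O(n)$ is easiest to exhibit explicitly on the Maiorana-McFarland class: using Lemma \ref{lem:MM}, write $f(x,y) = x\pi(y)^t + g_0(y)$ and note that a shift $s=(a,b)$ produces $g(x,y) = f(x+a, y+b) = x\pi(y+b)^t + a\pi(y+b)^t + g_0(y+b)$; classically one can query $g$ at $O(n)$ well-chosen points $(x,y)$ with $y$ fixed to extract the linear form $x \mapsto x\pi(y+b)^t$, recover $\pi(y+b)$, hence (since $\pi$ is known) recover $b$, and then recover $a$ from one more evaluation. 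The quantum $2$-query upper bound is already given by Theorem \ref{th:alg1} once we observe that the Maiorana-McFarland dual $\widetilde f$ is efficiently computable, so ${\cal O}_{f,\widetilde f}$ is a legitimate oracle.

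For the second claim—the recursive oracle ${\cal O}_{rec}$ with $poly(n)$ quantum query complexity but superpolynomial classical query complexity—the plan is to mimic the recursive Fourier sampling (RFS) construction of Bernstein and Vazirani \cite{BV:97}, but with each node of the tree being an instance of the Maiorana-McFarland hidden shift problem rather than a hidden linear function. Concretely, I would build a tree of depth $d$ in which each internal node holds a hidden shift $s_v \in \Z_2^n$, and the oracle answer at a node is only revealed once the child instances have been solved and their answers fed in as the "promise label" needed to evaluate the current node (for instance, the function $g_0$ or the permutation $\pi$ at node $v$ is keyed by the concatenation of the solved shifts at $v$'s children). Choosing $d = \Theta(\log n)$ and noting that a quantum algorithm solves each node instance with $2$ queries (Theorem \ref{th:alg1}) while a classical algorithm needs $\Theta(n)$ queries per node (part one) yields a quantum query complexity of $2^{O(\log n)} \cdot poly(n) = poly(n)$ against a classical query complexity of $n^{\Omega(\log n)}$, which is superpolynomial.

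The main obstacle is making the recursive construction rigorous, and specifically establishing the classical \emph{lower} bound for the recursive oracle rather than merely multiplying per-node costs heuristically. The subtlety, familiar from the RFS analysis and its refinements \cite{Aaronson:2003}, is that a classical algorithm is not forced to solve the tree node-by-node in a clean way; one must argue that partial information about ancestors is useless without the child answers, which requires that the oracle enforce the dependency structure (each node's responses are only meaningful given the correct labels of its children) and that the per-node problem retains its $\Omega(n)$ hardness even when a classical algorithm is allowed to interleave queries across the tree adaptively. I would handle this with a hybrid argument: define a sequence of oracles interpolating between the real recursive oracle and one that answers randomly at unresolved nodes, and bound the number of queries needed to distinguish consecutive hybrids using the single-level $\Omega(n)$ bound, so that distinguishing the extremes costs $n^{\Omega(\log n)}$ queries. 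The quantum side is comparatively routine: one verifies that Algorithm ${\cal A}_1$ can be run coherently at each node using the answers computed for the children stored in superposition, incurring only the stated polynomial overhead.
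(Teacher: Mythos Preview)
Your proposal is correct and follows essentially the same approach as the paper: an information-theoretic $\Omega(n)$ lower bound, an explicit $O(n)$ classical strategy exploiting the Maiorana--McFarland structure, and then a Bernstein--Vazirani style recursion of depth $\Theta(\log n)$ to amplify the $\Theta(n)$-versus-$O(1)$ gap into $n^{\Omega(\log n)}$ versus $\mathrm{poly}(n)$.

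Two small points of divergence are worth flagging. First, in your $O(n)$ upper bound you write ``since $\pi$ is known'' to invert $\pi(y+b)\mapsto b$; in the oracle model of the paper one only has black-box access to $f$, $g$, $\widetilde f$, so the paper instead spends queries to $\widetilde f(x,y)=\pi^{-1}(x)y^t+g_0(\pi^{-1}(x))$ at points $(\pi(y+b),e_i)$ to read off the bits of $\pi^{-1}(\pi(y+b))$. This is exactly the fix your sketch needs and costs only $O(n)$ additional queries, so the bound is unaffected. (Your ``one more evaluation'' for $a$ is also off by a factor of $n$, but again harmless for the $\Theta(n)$ claim.) Second, for the recursive lower bound the paper simply invokes the standard RFS argument of \cite{BV:97} without detail, whereas you outline a hybrid argument \`a la \cite{Aaronson:2003}; your version is more rigorous than what the paper actually writes, and the extra care you describe is indeed what a full proof would require.
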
 

\begin{proof} The proof of the lower bound on the classical query
  complexity for ${\cal O}$ is information theoretic. The tightness of
  the bound follows since $n$ bits of information about $s$ have to be
  gathered and each query can yield at most $1$ bit. To see that
  $O(n)$ are indeed sufficient, consider the following (adaptive)
  strategy for finding a shift $(s,s^\prime)$ of $g(x,y) =
  (x+s)\pi(y+s^\prime)$: first query $g(x,y)$ on $(0,0)$ to extract
  $s\pi(s^\prime)$. Then subtract this from the values at the points
  $(e_i,0)$, where $e_i$ denotes the $i$th standard basis vector.
  This gives the bits of $\pi(s^\prime)$. Next evaluate
  $\widetilde{f}(x,y)=\pi^{-1}(x)y^t$ at the points $(\pi(s^\prime),
  e_i)$.  This gives the bits of $s^\prime$. Finally,
  from evaluating $g$ at points $(0,\pi^{-1}(e_i)+s^\prime)$ we can obtain
  the bits of $s$, i.\,e., the entire hidden shift $(s,s^\prime)$.

  A standard argument can be invoked \cite{BV:97} to recursively
  construct an oracle which hides a function computed by a tree, the
  nodes of which are given by the oracle hiding a string $s$. In order
  to evaluate $f(x)$ at a node, first a sequence of smaller instances
  of the problem have to be solved. We do not go into further detail
  of the construction and only note that we get the analogous result
  as in \cite{BV:97}, see also \cite{HH:2008}, namely that a tree of
  height $\log n$ leads to a quantum query complexity of $2^{\log n}$
  which is polynomial in $n$, whereas the classical query complexity
  is given by $n^{\log n}$ which grows faster than any polynomial.
\end{proof}

The following theorem avoids the adaptive queries in the proof of
Theorem \ref{th:superpoly} and uses oracles of the form ${\cal O}_f$
in which no queries to the dual bent function are allowed. Since the
quantum computer can still determine the shift in polynomial time,
here an exponential separation between classical and quantum query
complexity can be shown.

\begin{theorem}\label{th:exponential}
  Let ${\cal O}_{f}$ be an oracle that hides a hidden shift $s$ for an
  instance $(f, g)$ of a hidden shift problem for a bent function $f$
  from Maiorana-McFarland class. Then classically $\Theta(\sqrt{2^n})$
  queries are necessary and sufficient to identify the hidden shift
  $s$. 
\end{theorem}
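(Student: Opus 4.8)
The plan is to establish the upper bound $O(\sqrt{2^n})$ by a randomized classical algorithm and the matching lower bound $\Omega(\sqrt{2^n})$ by an adversary/information-theoretic argument, both exploiting the specific structure of the Maiorana--McFarland class $f(x,y)=x\pi(y)^t$ (the term $g(y)$ can be absorbed and does not help). Recall that the shifted function is $g(x,y)=(x+s)\pi(y+s')^t$, so for a fixed second argument $y=y_0$ the function $x\mapsto g(x,y_0)$ is an \emph{affine} function of $x$ whose linear part is $\pi(y_0+s')$ and whose constant term is $s\,\pi(y_0+s')^t$. The key obstruction on the classical side is that without access to $\widetilde f$ the learner cannot invert $\pi$, so the only way to pin down $s'$ is to \emph{detect} when the queried point $y_0$ equals $s'$ (equivalently, when $y_0+s'=0$): precisely at that point the linear part $\pi(0)$ becomes a \emph{fixed, known} value, whereas for a random permutation $\pi$ the value $\pi(y_0+s')$ at any other point looks uniformly random and uncorrelated with anything the learner has seen. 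So morally this is a search problem: find the ``marked'' input $y_0=s'$ among $2^{n/2}$ candidates. That already suggests the $\sqrt{2^n}=2^{n/2}$ scaling.

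For the upper bound, I would argue: by querying $g$ at the $n/2+1$ points $(0,y_0),(e_1,y_0),\dots,(e_{n/2},y_0)$ the learner recovers the full linear part $\pi(y_0+s')$ and the constant $s\,\pi(y_0+s')^t$. If at some point the recovered linear part equals the \emph{a priori known} value $\pi(0)$ — which the learner knows because $\pi$ (hence the unshifted $f$) is given — then with overwhelming probability $y_0=s'$, and once $s'$ is known one extracts $\pi(s')=\pi(0)$, hence $s$ from the constant term, and then completes $s$ fully by $n/2$ further queries as in the proof of Theorem~\ref{th:superpoly}. Since $s'$ is one of $2^{n/2}$ values, trying a uniformly random schedule of distinct $y_0$'s finds it after $O(2^{n/2})$ trials in expectation, for a total of $O(n\cdot 2^{n/2})=O(\sqrt{2^n})$ queries; a coupon-collector refinement or simply a direct count gives the clean bound. (One must also handle the knowledge of $s$ once $s'$ is known, but that is an $O(n)$ additive term.)

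For the lower bound I would use Yao's principle with a hard distribution: pick $\pi$ a uniformly random permutation of $\Z_2^{n/2}$, $s'$ uniform, and $s$ uniform. I claim any classical algorithm making $q<c\,2^{n/2}$ queries fails to determine $(s,s')$ with constant probability. The crux is that conditioned on the queries made so far (say at second-coordinates $y_1,\dots,y_q$, all distinct, none equal to $s'$), the values $\pi(y_i+s')$ are a uniformly random injective sequence that reveals nothing about $s'$; only a query at $y_i=s'$ ``reveals'' the structure by exhibiting the known value $\pi(0)$. Since $s'$ is uniform over $2^{n/2}$ points and the algorithm's queries (even adaptive) can cover at most $q$ of them, the probability of ever hitting $s'$ is at most $q/2^{n/2}$; absent such a hit, $s'$ remains uniform over the at-least-$(2^{n/2}-q)$ untouched points and cannot be output correctly with probability more than $1/(2^{n/2}-q)$. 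Making this rigorous requires the usual care — a martingale/exposure argument to show that the joint distribution of answers seen before the first ``hit'' is independent of $s'$ (this is where randomness of $\pi$ is essential, to kill any correlation an adaptive learner might exploit through the linear parts or constant terms), plus absorbing the constant-term information, which only leaks $s\cdot\pi(y_i+s')^t$ and hence (for random $\pi$) a single random linear functional of $s$ per query, useless for locating $s'$. I expect this independence-before-first-hit lemma to be the main technical obstacle; once it is in hand, the bound $q=\Omega(2^{n/2})=\Omega(\sqrt{2^n})$ follows, matching the upper bound.
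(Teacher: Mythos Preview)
Your lower-bound argument has a genuine gap. You model the problem as a \emph{search}: the learner must query $g$ at some second coordinate $y_0=s'$, at which point the recovered linear part $\pi(y_0+s')=\pi(0)$ matches a value it already knows, and you claim that prior to such a ``hit'' the observed linear parts $\pi(y_i+s')$ are an injective random sequence independent of $s'$. But this ignores that the learner also has oracle access to $f$, and can therefore learn $\pi$ on an \emph{arbitrary} set $A=\{a_1,\dots,a_k\}$ of second coordinates, not just at $0$. Combining this with $g$-queries at second coordinates $B=\{b_1,\dots,b_k\}$, the learner sees $\pi$ on $A$ and on $B+s'$; since $\pi$ is injective, any equality $\pi(a_i)=\pi(b_j+s')$ immediately yields $s'=a_i+b_j$. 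Your ``hit $s'$'' event is just the degenerate collision $a_i=0$, $b_j=s'$. The general collision occurs whenever $s'$ lies in the difference set $A+B$, which has size up to $k^2$, so the birthday bound---not a linear search bound---governs the problem. Consequently your independence-before-first-hit lemma is false as stated: once the algorithm knows $\pi$ on a nontrivial set $A$ via $f$-queries, the values $\pi(b_j+s')$ \emph{are} correlated with $s'$, precisely through membership in $\pi(A)$.

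The paper's proof takes the collision route from the outset, following the Simon/de~Beaudrap--Cleve--Watrous template: it strengthens the oracle to give $\pi(\cdot)$ and $\pi(\cdot+s)$ directly, forms the set $D$ of all queried points, and argues that for any $s'$ outside the difference set $D^{(-)}=\{d_1-d_2:d_1,d_2\in D\}$ one can replace $\pi$ by some $\pi'$ consistent with every observed answer while realizing the shift $s'$ instead. Thus the correct obstruction is ``$s\in D^{(-)}$,'' whose size is $O(k^2)$, not ``some query equals $s$.'' Your upper bound is suboptimal for the same reason: the birthday/collision strategy beats exhaustive search over $s'$.
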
 

\begin{proof}
  The proof is similar to the lower bound for the linear structure
  problem considered in \cite{DCW:2002} and the query lower bound for
  Simon's problem \cite{Simon:94}. First, note that we can use Yao's
  minimax principle \cite{Yao:77} to show limitations of a
  deterministic algorithm ${\cal A}$ on the average over an
  adversarially chosen distribution of inputs.  Hence, we can consider
  deterministic algorithms and $\pi$ and $s$ in the definition of
  $f(x,y) = x \pi(y)^t$ and $g(x,y) = f(x,y+s)$ will be chosen randomly.

The distribution we chose to show the lower is to chose $\pi$
uniformly at random in $S_{2^n}$, the symmetric group on the strings
of length $n$, and $s=(s_1,s_2) \in \Z_2^{2n}$ such that $s_1=0$ and
$s_2$ is chosen uniform at random in $\Z_2^n$. The instances we
consider are given by oracle access to the functions $f(x,y) = x
\pi(y)^t$ and $g(x,y) = f(x, y+s) = x\pi(y+s)^t$. Now, without loss of
generality we can assume that the classical algorithm ${\cal A}$ has
(adaptively or not) queried the oracle $k=n^{O(1)}$ times, i.\,e., it
has chosen pairs $(x_i, y_i)$ for $i=1,\ldots,k$ and obtained results
\begin{eqnarray*}
x_i \pi(y_i)^t &=& a_i \\
x_i \pi(y_i+s)^t &=& b_i. 
\end{eqnarray*}

In order to characterize the information about $s$ after these $k$
queries we define set $D=\{x_i:i=1,\ldots,k\} \cup
\{y_i:i=1,\ldots,k\}$. We show that if no collision between the values
of $f$ and $g$ was produced, then the information obtained about $s$
is exponentially small. To simplify our argument, we actually make the
classical deterministic algorithm more powerful by giving oracle
access to $\pi(x)$ and $\pi(x+s)$. Consider the set of all differences
$D^{(-)} = \{d_1-d_2: d_1, d_2 \in D\}$ and the set $D_{good} =
\Z_2^n\setminus D^{(-)}$.  Note that for an abelian group $A$ and
subset $D\subset A$ with $|D|^2<|A|$ we can always choose a set $S$
such that $D \cap (D+s) = \emptyset$ for all $s\in S$. Indeed, we can
choose $S=D_{good}$ since $x \in D \cap (D+s)$ would imply that there
exist $d_1,d_2\in D$ with $d_1 = d_2+s$, i.\,e., $s\in D^{(-)}$ which
is a contradiction.  Notice in our case that $|S|\geq 2^n - |D^{(-)}|
= 2^n - n^{O(1)}$.  Now, we can change the value of the shift $s$ to
any other value $s^\prime$ as long as the algorithm has not queried
$s$ directly (the chances of which are exponentially small: because of
a birthday for the strings $s$, the probability is given by
$\Theta\left(\frac{1}{\sqrt{2^n}}\right)$). We do this by choosing
$\pi^\prime$ in such a way that it maps $\pi(y_i+s) =
\pi^\prime(y_i+s^\prime)$ while being consistent with all other
queries. Because of the above argument, as long as there is no
collision, after $\ell$ queries to $f$, $g$, we still have a set
$S$ of size $|S| \geq 2^n-n^{O(1)}$ of candidates $s^\prime$, and
$\pi^\prime$ which are also consistent with the sampled data, showing
the lower bound.
\end{proof}

\begin{corollary}
There exists an oracle ${\cal O}$ implementing a Boolean function 
such that ${\sf P}^{\cal O} \not= {\sf BQP}^{\cal O}$. 
\end{corollary}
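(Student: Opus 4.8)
The plan is to package the hidden shift problem for Maiorana--McFarland bent functions into a single Boolean oracle together with a single language, and then to combine the quantum upper bound of Theorem~\ref{th:alg2} with the classical lower bound of Theorem~\ref{th:exponential} and a routine diagonalization. Concretely, I would reserve for each $n$ a block of inputs of length $2n+1$, and let the oracle ${\cal O}$ on a triple $(b,x,y)$ with $b\in\Z_2$ and $x,y\in\Z_2^n$ output $f_n(x,y)=x\pi_n(y)^t$ if $b=0$ and $g_n(x,y)=f_n(x,y+s_n)$ if $b=1$, for a permutation $\pi_n$ of $\Z_2^n$ and a shift $s_n\in\Z_2^{2n}$ that will be fixed by the construction (on all other input lengths ${\cal O}$ is identically $0$). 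Then ${\cal O}$ does implement an honest Boolean function, and its restriction to block $n$ is exactly an instance ${\cal O}_{f_n}$ of the hidden shift problem from the Maiorana--McFarland class. I would define $L^{\cal O}=\{\,1^n : \text{the parity of the bits of } s_n \text{ equals } 1\,\}$; any fixed non-constant linear predicate of $s_n$ would serve equally well.

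The containment $L^{\cal O}\in{\sf BQP}^{\cal O}$ is immediate from Theorem~\ref{th:alg2}: on input $1^n$, run algorithm ${\cal A}_2$, using the block-$n$ part of ${\cal O}$ to implement $U_{f_n}$ and $U_{g_n}$; this recovers $s_n$ with constant success probability using $O(n)$ queries and $O(n^2)$ gates, and standard amplification pushes the success probability above $2/3$. Once $s_n$ is known, its parity (and hence membership in $L^{\cal O}$) is computed classically. Note that this holds for \emph{every} choice of the $\pi_n$ and $s_n$, so the remaining freedom can be spent entirely on the lower bound.

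It remains to fix ${\cal O}$ so that $L^{\cal O}\notin{\sf P}^{\cal O}$, and here Theorem~\ref{th:exponential}---more precisely the consistency analysis inside its proof---does the work. I would enumerate the deterministic polynomial-time oracle machines $M_1,M_2,\dots$ and process them one at a time, each on a fresh large value of $n$: choose $n$ so large that $M_i$ makes at most $n^{O(1)}<\frac{1}{2}\sqrt{2^n}$ queries on input $1^n$ and that block $n$ is still untouched. Running $M_i$ on $1^n$, the argument in the proof of Theorem~\ref{th:exponential} shows that after these queries the set $S$ of shifts $s^\prime$ consistent with the observed answers (together with a suitable completion $\pi^\prime$ of the partially specified permutation) has size at least $2^n-n^{O(1)}$, so in particular $S$ contains shifts of both parities. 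I would pick $s_n\in S$ whose parity forces $M_i$ to answer incorrectly on $1^n$, extend $\pi_n$ to a full permutation consistent with the finitely many queried values, and freeze the block-$n$ part of ${\cal O}$ accordingly, leaving the untouched points of block $n$ and all not-yet-assigned blocks free for later stages. Since each stage makes one machine wrong on one input, $L(M_i^{\cal O})\neq L^{\cal O}$ for all $i$, whence $L^{\cal O}\notin{\sf P}^{\cal O}$ and therefore ${\sf P}^{\cal O}\neq{\sf BQP}^{\cal O}$. The same reasoning combined with Yao's minimax, exactly as in the proof of Theorem~\ref{th:exponential}, upgrades the separation to ${\sf BPP}^{\cal O}\neq{\sf BQP}^{\cal O}$.

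The main obstacle is bookkeeping rather than a new idea: one must check that at every stage the block being frozen can still be completed to an \emph{honest} instance---that some permutation $\pi_n$ and shift $s_n$ realize all the finitely many answers already committed to, with $g_n$ a genuine shift of $f_n$---which is precisely the consistency statement established inside the proof of Theorem~\ref{th:exponential}, and that the stages use pairwise disjoint blocks so freezing one never invalidates another. One should also confirm that the predicate defining $L^{\cal O}$ is non-constant on the candidate set $S$, which holds since $|S|\ge 2^n-n^{O(1)}$ and the predicate is a non-constant linear function, so a wrong-forcing $s_n$ always exists.
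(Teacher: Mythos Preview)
The paper does not actually give a proof of this corollary; it is stated immediately after Theorem~\ref{th:exponential} and left as an implicit consequence of the exponential query separation just established together with the polynomial-time quantum algorithm of Theorem~\ref{th:alg2}. Your proposal therefore does not conflict with the paper's proof so much as it supplies, in full, the standard Baker--Gill--Solovay style diagonalization that the paper takes for granted. In that sense your approach \emph{is} the intended one, carried out explicitly.

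Your argument is correct in outline and in its use of Theorems~\ref{th:alg2} and~\ref{th:exponential}. Two small points of bookkeeping are worth tightening. First, there is a type mismatch: you write $s_n\in\Z_2^{2n}$ but then $g_n(x,y)=f_n(x,y+s_n)$, which only makes sense for $s_n\in\Z_2^n$; the hard distribution in the proof of Theorem~\ref{th:exponential} fixes the $x$-part of the shift to zero and varies only the $y$-part, so you should take $s_n\in\Z_2^n$ (or write $g_n(x,y)=f_n(x+s_{n,1},y+s_{n,2})$ and then restrict to $s_{n,1}=0$ when you invoke the consistency argument). Second, in the diagonalization you should be explicit that when simulating $M_i$ on $1^{n_i}$ you answer queries to already-frozen blocks by their frozen values and queries to not-yet-frozen blocks by a fixed default instance (say $\pi=\mathrm{id}$, $s=0$); then choose $n_i$ large enough that the total number of queries touching block $n_i$ in all earlier stages is at most polynomial, so the set $S$ of still-consistent shifts remains of size $2^{n_i}-n_i^{O(1)}$. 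With these clarifications your proof goes through, and the upgrade to ${\sf BPP}^{\cal O}\neq{\sf BQP}^{\cal O}$ via Yao's principle is exactly as you indicate.
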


\section{Conclusions}

We introduced the hidden shift problem for a class of Boolean
functions which are at maximum distance to all linear functions. For
these so-called bent functions the hidden shift problem can be
efficiently solved on a quantum computer, provided that we have oracle
access to the shifted version of the function as well as its dual bent
function.  The quantum computer can extract the hidden shift using
just one query to these two functions and besides this only requires
to compute the Hadamard transform and measure qubits in the standard
basis.  We showed that this task is significantly more challenging for
a classical computer and proved an exponential separation between 
quantum and classical query complexity. 


\section*{Acknowledgments} 
The author gratefully acknowledges support by ARO/NSA under grant
W911NF-09-1-0569 and would like thank the anonymous referees for
valuable comments on this paper and earlier versions of it. 

\newcommand{\etalchar}[1]{$^{#1}$}

\end{document}